\documentclass[a4paper,twocolumn,10pt,accepted=2024-09-13]{quantumarticle}
\pdfoutput=1
\usepackage[utf8]{inputenc}
\usepackage[english]{babel}
\usepackage[T1]{fontenc}
\usepackage{amsmath}
\usepackage{hyperref}

\usepackage{mathrsfs}
\usepackage{bbm}
\usepackage{amssymb}
\usepackage{amsthm}
\usepackage{mathtools}
\usepackage{bbold}
\usepackage[italicdiff]{physics}

\usepackage{tikz}
\usepackage{lipsum}

\usepackage[numbers,sort&compress]{natbib}

\newtheorem{theorem}{Theorem}
\newtheorem{lemma}[theorem]{Lemma}

\newcommand{\mca}{\mathcal}
\newcommand{\mbb}{\mathbb}

\newcommand{\mket}[1]{|#1\rangle}

\newcommand{\mdyad}[1]{|#1\rangle\langle #1|}

\DeclareMathOperator{\diag}{diag}

\begin{document}

\title{Optimal light cone for macroscopic particle transport in long-range systems: A quantum speed limit approach}

\author{Tan Van Vu}
\email{tan.vu@yukawa.kyoto-u.ac.jp}
\affiliation{Analytical Quantum Complexity RIKEN Hakubi Research Team, RIKEN Center for Quantum Computing (RQC), 2-1 Hirosawa, Wako, Saitama 351-0198, Japan}
\affiliation{Yukawa Institute for Theoretical Physics, Kyoto University, Kitashirakawa Oiwakecho, Sakyo-ku, Kyoto 606-8502, Japan}
\orcid{0000-0001-8184-9433}

\author{Tomotaka Kuwahara}
\email{tomotaka.kuwahara@riken.jp}
\affiliation{Analytical Quantum Complexity RIKEN Hakubi Research Team, RIKEN Center for Quantum Computing (RQC), 2-1 Hirosawa, Wako, Saitama 351-0198, Japan}
\affiliation{PRESTO, Japan Science and Technology (JST), Kawaguchi, Saitama 332-0012, Japan}
\orcid{0000-0002-1612-3940}

\author{Keiji Saito}
\email{keiji.saitoh@scphys.kyoto-u.ac.jp}
\affiliation{Department of Physics, Kyoto University, Kyoto 606-8502, Japan}
\orcid{0000-0002-8464-3198}

\maketitle

\begin{abstract}
Understanding the ultimate rate at which information propagates is a pivotal issue in nonequilibrium physics. Nevertheless, the task of elucidating the propagation speed inherent in quantum bosonic systems presents challenges due to the unbounded nature of their interactions. In this study, we tackle the problem of macroscopic particle transport in a long-range generalization of the lattice Bose-Hubbard model through the lens of the quantum speed limit. By developing a unified approach based on optimal transport theory, we rigorously prove that the minimum time required for macroscopic particle transport is always bounded by the distance between the source and target regions, while retaining its significance even in the thermodynamic limit. Furthermore, we derive an upper bound for the probability of observing a specific number of bosons inside the target region, thereby providing additional insights into the dynamics of particle transport. Our results hold true for arbitrary initial states under both long-range hopping and long-range interactions, thus resolving an open problem of particle transport in generic bosonic systems.
\end{abstract}

\section{Introduction}
The investigation of the velocity at which information propagates is a central topic in quantum mechanics.
On this subject, two primary approaches---the Lieb-Robinson bound \cite{Lieb.1972.CMP} and the quantum speed limit \cite{Mandelstam.1945.JP}---have attracted great attention.
The Lieb-Robinson bound shows the existence of an effective light cone, outside of which information propagation undergoes an exponential decay with distance. Hence, it clarifies the \emph{unreachable} region of information propagation for a given time.
This bound is not only fundamental for establishing causality in physics but also plays a pivotal role in analyzing universal properties in many-body systems \cite{Hastings.2005.PRB,Bravyi.2006.PRL,Hastings.2006.CMP,Nachtergaele.2006.CMP,Osborne.2006.PRL,Hastings.2007.JSM,Acoleyen.2013.PRL,Roberts.2016.PRL,Iyoda.2017.PRL,Haah.2018.JC,Anshu.2021.NP,Alhambra.2021.PRXQ,Kuwahara.2022.PRX,Chen.2023.RPP}, such as the clustering theorem in ground states \cite{Hastings.2006.CMP,Nachtergaele.2006.CMP}, the area law of entanglement \cite{Hastings.2007.JSM,Acoleyen.2013.PRL}, and computational complexity \cite{Haah.2018.JC,Anshu.2021.NP}, to name only a few.
On the other hand, the quantum speed limit addresses the minimum time required for the propagation of information to a \emph{reachable} regime. 
In the celebrated Mandelstam-Tamm speed limit, the minimum time required for the initial density matrix to change to another state is bounded from below by the ratio of the distance between the two states to the energy fluctuation.
To date, the quantum speed limit has been generalized and applied to various fields \cite{Uhlmann.1992.PLA,Margolus.1998.PD,Campo.2013.PRL,Deffner.2013.PRL,Taddei.2013.PRL,Pires.2016.PRX,Mondal.2016.PLA,Deffner.2017.NJP,Shanahan.2018.PRL,Campaioli.2018.PRL,Funo.2019.NJP,GarcaPintos.2019.NJP,Sun.2021.PRL,Vu.2021.PRL,Shiraishi.2021.PRR,Hamazaki.2022.PRXQ,Vu.2023.PRL.TSL,Hasegawa.2023.NC,Deffner.2017.JPA}.
Essentially, these two notions of speed limits, despite having been developed independently in the literature, offer complementary perspectives on the fundamental question of how much time is required to propagate information.
It is thus expected that the proper utilization of these two notions in a unified manner will yield physically novel results and have significant methodological implications, especially for open problems that were previously intractable through the isolated use of either method.

Let us consider the primary problems that Lieb and Robinson have addressed.
In the early days of studies, the Lieb-Robinson bound was explored in the context of short-range interacting spin systems. It was shown that information propagation is effectively confined within the linear light cone, which grows at a finite speed.
In scenarios involving long-range interactions, however, the existence of linear light cones becomes significantly intricate, as such interactions have the capability to instantaneously transmit information to distant places \cite{Eisert.2013.PRL}.
Here, long-range interaction means that the interaction strength between distinct sites obeys a power-law decay as $1/d^\alpha$ with distance $d$.
Intuitively, the shape of the light cone is dependent on the exponent $\alpha$ and may no longer be linear.
Given the ubiquity of long-range systems in nature \cite{Saffman.2010.RMP,Defenu.2023.RMP} and their fundamental interest, it is of great importance to elucidate the shape of the light cone with respect to the power decay $\alpha$.
Thus far, a comprehensive characterization of the shape of the light cone has been established for long-range interacting quantum spin and fermionic systems \cite{Foss-Feig.2015.PRL,Matsuta.2016.AHP,Chen.2019.PRL,Kuwahara.2020.PRX,Else.2020.PRA,Tran.2021.PRX,Kuwahara.2021.PRL2,Tran.2021.PRL, Chen.2021.PRA,Gong.2023.PRL}. 
In Ref.~\cite{Kuwahara.2020.PRX}, the condition for the linear light cone was proven to be $\alpha>2D+1$, where $D$ denotes the spatial dimension.
Furthermore, for the entire regime of $\alpha>2D$, the optimal form of the light cone has been identified as $\tau\gtrsim d^{\min(1,\alpha-2D)}$ \cite{Tran.2021.PRL}.

As another important class, quantum bosonic systems (e.g., the paradigmatic Bose-Hubbard model \cite{Bloch.2008.RMP}) hold significant importance in the study of cold atoms in experimental setups. 
Nevertheless, the theoretical investigation for bosonic systems has persisted due to the unbounded nature of interactions in such systems.
Classifying the effective light cone within the Bose-Hubbard model has been substantiated through both experimental observations and numerical simulations using ultracold gases confined in optical lattices \cite{Luchli.2008.JSM,Carleo.2014.PRA,Cevolani.2015.PRA,Cheneau.2012.N,Takasu.2020.S}.
While the speeds of bosonic transport and information propagation have been studied for bosonic systems with short-range hoppings \cite{Schuch.2011.PRA,Wang.2020.PRXQ,Kuwahara.2021.PRL,Yin.2022.PRX,Kuwahara.2024.NC}, the exploration of long-range cases remains in an early stage of development \cite{Faupin.2022.PRL,Faupin.2022.CMP,Lemm.2023.PRA}.
As the first step toward the Lieb-Robinson bound for long-range cases, the problem of macroscopic bosonic transport has been studied.
Here, macroscopic transport refers to the transfer of a macroscopic number of bosons (i.e., comparable to the total boson number) between separated regions.
Concerning this, Faupin and coworkers \cite{Faupin.2022.PRL} have recently demonstrated the finite velocity of macroscopic particle transport by showing the existence of the linear light cone for $\alpha>D+2$\footnote{Although the range of $\alpha$ considered in Ref.~\cite{Faupin.2022.PRL} is stated as $\alpha\ge D+3$, it can be improved to $\alpha>D+2$ by setting $p=2$ instead of $p=\alpha-D-1$.}. 
However, we are still far from the complete classification of the effective light cone for macroscopic particle transport.
In particular, it is a critical problem to identify the optimal form of the light cone for macroscopic particle transport in bosonic systems, as has been studied for quantum spin and fermionic systems.

\begin{figure}
\centering
\includegraphics[width=1\linewidth]{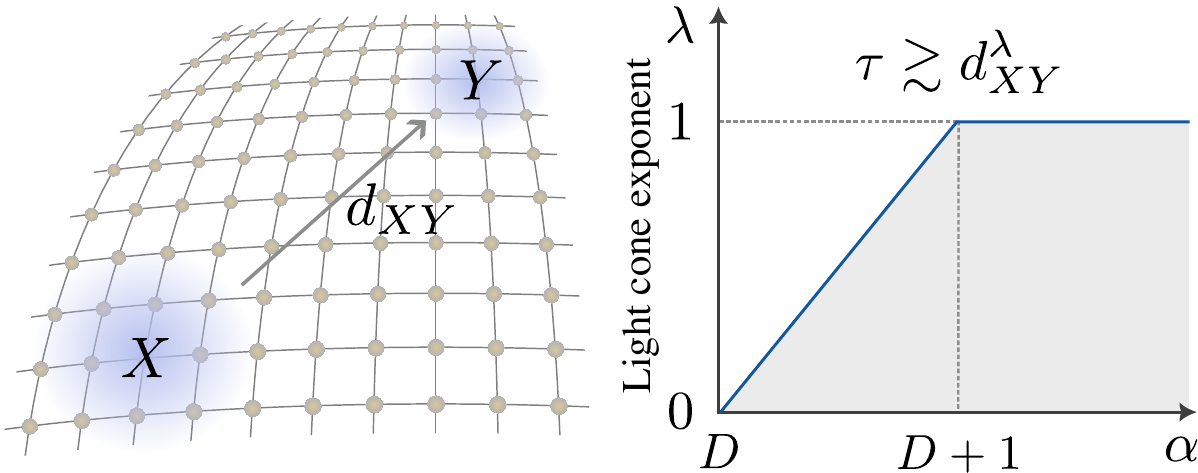}
\protect\caption{(a) Schematic of long-range bosonic systems, wherein bosons are transported from region $X$ to a separate region $Y$ with a distance of $d_{XY}$.
(b) Diagram of our first main result on the minimum operational time $\tau$ for particle transport. The gray area denotes the prohibited region of the exponent.}\label{fig:Cover}
\end{figure}

In this study, we resolve the above problem for macroscopic particle transport with a {\it quantum speed limit approach}. We consider the problem of particle transport in a generalization of the Bose-Hubbard model with {\it long-range hopping} and {\it long-range interactions}, spanning the {\it entire regime} of $\alpha>D$. 
We develop a unified speed limit technique based on optimal transport theory, and we demonstrate the finite speed of macroscopic bosonic transport.
Specifically, we rigorously prove that the minimum time required for the transport of a macroscopic number of bosons is always bounded by the distance $d$ as $\tau\gtrsim d^{\min(1,\alpha-D)}$ [cf.~Eq.~\eqref{eq:speed.limit}; see Fig.~\ref{fig:Cover} for illustration].
The exponent in this bound is optimal, as it can be achieved through a constructive state-transfer protocol. 
Additionally, we derive a limit for the probability of transporting a specific number of bosons into a target region [cf.~Eq.~\eqref{eq:prob.bound}].
We emphasize that all our results unconditionally hold for arbitrary initial states including pure and mixed states.
Notably, our proofs are significantly simple, necessitating only a few fundamental properties of the Wasserstein distance.
These findings highlight the efficacy of our unified approach in overcoming the primary open problems related to speed in quantum many-body physics.

\section{Results}
We consider a generic model of bosons on an arbitrary $D$-dimensional lattice, wherein bosons can hop between arbitrary sites and interact with each other.
The system Hamiltonian is time dependent (i.e., an external control protocol can be applied to the system) and can be expressed in the following form:
\begin{equation}
H_t\coloneqq\sum_{i\neq j\in\Lambda}J_{ij}(t)\hat{b}_i^\dagger \hat{b}_j+\sum_{Z\subseteq\Lambda}h_Z(t).
\end{equation}
Here, $\Lambda$ denotes the set of all the sites in the lattice, $\hat{b}_i^\dagger$ and $\hat{b}_i$ are the bosonic creation and annihilation operators for site $i$, respectively, and $h_Z(t)$ is an arbitrary function of $\{\hat n_i\}_{i\in Z}$, where $\hat n_i\coloneqq \hat{b}_i^\dagger \hat{b}_i$ is the number operator.
Note that $h_Z(t)$ does not need to be local (i.e., $Z$ can be arbitrarily large).
In other words, both {\it long-range hopping} and {\it long-range interactions} are allowed in our setup.
The hopping terms $J_{ij}(t)$ are symmetric [i.e., $J_{ij}(t)=J_{ji}(t)$] and upper bounded by a power law in the Euclidean distance, i.e., $|J_{ij}(t)|\le J/\|i-j\|^\alpha$ for some positive constant $J$, where $\|\cdot\|$ denotes the Euclidean norm and the power decay $\alpha$ satisfies $\alpha>D$\footnote{Note that in the $\alpha \le D$ regime, particle transport speed can be effectively infinite. To elaborate further, there exists a transport protocol that transfers a single particle over a distance $d$ within time $\tau = O(\log d)$ when $\alpha = D$, and $\tau = O(1)$ when $\alpha < D$ \cite{Tran.2020.PRX}. Given our focus on scenarios involving finite bosonic transport speeds, the region where transport speed becomes infinite is excluded from our study.}.
Examples include the standard Bose-Hubbard model, specified by $J_{ij}(t)=0$ for any non-neighboring sites $i$ and $j$ and $\sum_{Z\subseteq\Lambda}h_Z(t)=(U/2)\sum_i\hat n_i(\hat n_i-1)-\mu\sum_i\hat n_i$, where $U$ and $\mu$ are real constants.
Let $\varrho_t$ be the system's density matrix at time $t$.
Then, its time evolution is described by the von Neumann equation ($\hbar=1$ for simplicity),
\begin{equation}
	\dot\varrho_t=-i[H_t,\varrho_t].
\end{equation}
It can be easily shown that the total number of particles in this system is conserved, denoted hereafter by $\mca{N}$.

We introduce some notations that will be used in this study.
For an arbitrary set $X$ of lattice sites, $X_c\coloneqq\Lambda\setminus X$ is the set of sites that are not present in $X$, and $|X|$ denotes its cardinality.
The distance between any two disjoint sets $X$ and $Y$ is defined as the minimum distance between sites belonging to each of them,
\begin{equation}
	d_{XY}\coloneqq\min_{i\in X,\,j\in Y}\|i-j\|.
\end{equation}
The number operator of local bosons occupied inside region $X$ is given by $\hat{n}_X\coloneqq\sum_{i\in X}\hat n_i$, and the corresponding concentration operator is $\overline{n}_X\coloneqq\hat{n}_X/\mca{N}$.
Let $\Pi_{\vec{N}}=\mdyad{\vec{N}}$ be the projection onto the Mott state $\mket{\vec{N}}$ (i.e., $\hat n_i\mket{\vec{N}}=n_i\mket{\vec{N}}$ for $\vec{N}=[n_i]_{i\in\Lambda}$).
Then, the projection $P_{\hat{n}_X\le N_0}$ is defined as
\begin{equation}
	P_{\hat{n}_X\le N_0}\coloneqq\sum_{\vec{N}:\mel{\vec{N}}{\hat{n}_X}{\vec{N}}\le N_0}\Pi_{\vec{N}}.
\end{equation}
The expectation of an observable $A$ with respect to state $\varrho$ is denoted as $\ev{A}_{\varrho}\coloneqq\tr{A\varrho}$.

To fully resolve the open problem of bosonic transport, a novel mathematical framework needs to be developed. Unlike the previous approach based on adiabatic spacetime localization observables in Ref.~\cite{Faupin.2022.PRL}, which could only achieve a partial resolution, we here tackle the problem using a quantum speed limit approach. It is important to note that the quantum speed limit primarily focuses on determining the minimum time required for state changes, but it may not provide accurate predictions for many-body systems \cite{Bukov.2019.PRX}. On the other hand, the Lieb-Robinson bound offers valuable insights into many-body systems but does not furnish information regarding the speed of transitions inside the light cone. Consequently, it seems reasonable to anticipate that unifying these two concepts could yield a novel bound that takes advantage of their respective strengths and establishes new constraints that neither concept can fully encompass on its own. To achieve this, we develop a unified speed limit based on optimal transport theory [cf.~Eq.~\eqref{eq:prim.sl}]. This approach enables us to integrate geometric structures of the underlying dynamics into the speed limit, allowing for meaningful limits to be derived, even when dealing with large-scale systems.

Given the aforementioned setup, we are now ready to explain the results, whose proofs are presented in Appendix \ref{app:method}.
Our first main result is the following statement, which describes a fundamental limit on the operational time required for a macroscopic transport of bosons in long-range systems.
\begin{theorem}\label{thm:tran.velo}
Consider a situation where a fraction $\mu\in(0,1]$ of all bosons is transported from region $X$ to a distant region $Y$.
Then, the operational time $\tau$ required for this macroscopic bosonic transport is lower bounded by the distance between the two regions as
\begin{equation}\label{eq:speed.limit}
	\tau\ge \kappa_{1}^\epsilon d_{XY}^{\min(1,\alpha-D-\epsilon)}.
\end{equation}
Here, $0<\epsilon<\alpha-D$ is an arbitrary number, $\kappa_{1}^\epsilon\coloneqq[J\gamma\zeta(\alpha-\alpha_\epsilon-D+1)]^{-1}\mu$ is a constant independent of the system size, $\alpha_\epsilon\coloneqq\min(1,\alpha-D-\epsilon)$, $\gamma$ is a $O(1)$ constant [cf.~Eq.~\eqref{eq:gamma.def}], and $\zeta$ denotes the Riemann zeta function.
\end{theorem}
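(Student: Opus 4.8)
The plan is to recast macroscopic particle transport as a statement about the optimal transport of the occupation profile $\vec{p}_t\coloneqq(\langle\hat{n}_i\rangle_{\varrho_t}/\mca N)_{i\in\Lambda}$---a probability measure on the lattice---and then feed it into the primal speed limit~\eqref{eq:prim.sl}. Fix $\epsilon\in(0,\alpha-D)$, set $\alpha_\epsilon=\min(1,\alpha-D-\epsilon)\in(0,1]$, and endow $\Lambda$ with the ``snowflake'' metric $\msf{d}_\epsilon(i,j)\coloneqq\|i-j\|^{\alpha_\epsilon}$, which is a genuine metric since $t\mapsto t^{\alpha_\epsilon}$ is subadditive; let $\mca W_\epsilon$ denote the associated Wasserstein-$1$ distance. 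The single analytic input I would isolate at the outset is the elementary lattice bound $\sup_{i\in\Lambda}\sum_{j\neq i}\|i-j\|^{-(\alpha-\alpha_\epsilon)}\le\gamma\,\zeta(\alpha-\alpha_\epsilon-D+1)$, which is finite precisely because $\alpha-\alpha_\epsilon>D+\epsilon>D$, with $\gamma$ the $O(1)$ geometric constant of~\eqref{eq:gamma.def}; its left-hand side is exactly $\sup_i\sum_{j\neq i}\msf{d}_\epsilon(i,j)/\|i-j\|^\alpha$.

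Next I would establish that the dynamics moves $\vec{p}_t$ in Wasserstein distance at a system-size-independent rate. Since each $h_Z(t)$ is a function of $\{\hat{n}_i\}_{i\in Z}$, it commutes with every number operator, so only hopping transports mass; combining $\dot\varrho_t=-i[H_t,\varrho_t]$ with $[\hat{n}_i,\hat{b}_k^\dagger\hat{b}_l]=(\delta_{ik}-\delta_{il})\hat{b}_k^\dagger\hat{b}_l$ and $J_{kl}=J_{lk}$ gives, for any bounded $\phi\colon\Lambda\to\mbb R$,
\begin{equation}
\frac{d}{dt}\sum_i\phi(i)\langle\hat{n}_i\rangle_{\varrho_t}=\sum_{k\neq l}J_{kl}(t)\,[\phi(k)-\phi(l)]\,\mathrm{Im}\langle\hat{b}_k^\dagger\hat{b}_l\rangle_{\varrho_t}.
\end{equation}
Estimating $|\mathrm{Im}\langle\hat{b}_k^\dagger\hat{b}_l\rangle|\le\sqrt{\langle\hat{n}_k\rangle\langle\hat{n}_l\rangle}\le\tfrac12(\langle\hat{n}_k\rangle+\langle\hat{n}_l\rangle)$ via Cauchy--Schwarz and AM--GM, inserting $|J_{kl}|\le J/\|k-l\|^\alpha$, and symmetrizing in $k\leftrightarrow l$, the modulus of the left-hand side is bounded by $J\sum_k\langle\hat{n}_k\rangle_{\varrho_t}\sum_{l\neq k}|\phi(k)-\phi(l)|/\|k-l\|^\alpha$. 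Whenever $\phi$ is $1$-Lipschitz with respect to $\msf{d}_\epsilon$---that is, $|\phi(k)-\phi(l)|\le\|k-l\|^{\alpha_\epsilon}$---the inner sum is at most $\gamma\zeta(\alpha-\alpha_\epsilon-D+1)$, so upon dividing by $\mca N$ the rate of change of $\sum_i\phi(i)\vec{p}_t(i)$ is at most $J\gamma\zeta(\alpha-\alpha_\epsilon-D+1)$ uniformly in $t$. Taking the supremum over such $\phi$ (Kantorovich duality) is precisely the primal speed limit~\eqref{eq:prim.sl}, and integrating it over $[0,\tau]$ yields $\mca W_\epsilon(\vec{p}_0,\vec{p}_\tau)\le J\gamma\,\zeta(\alpha-\alpha_\epsilon-D+1)\,\tau$.

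Finally I would lower-bound $\mca W_\epsilon(\vec{p}_0,\vec{p}_\tau)$ using the dual potential $\phi_\star(i)\coloneqq\min(\mathrm{dist}(i,X),d_{XY})^{\alpha_\epsilon}$, where $\mathrm{dist}(i,X)\coloneqq\min_{j\in X}\|i-j\|$: because $\mathrm{dist}(\cdot,X)$ is Euclidean-$1$-Lipschitz and $t\mapsto t^{\alpha_\epsilon}$ is concave, $\phi_\star$ is $1$-Lipschitz for $\msf{d}_\epsilon$, vanishes on $X$, equals $d_{XY}^{\alpha_\epsilon}$ on $Y$, and is bounded (so the manipulations above remain valid even on an infinite lattice). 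In the macroscopic-transport scenario---the bosons start inside $X$ and a fraction $\mu$ of them ends up inside $Y$---one has $\sum_i\phi_\star(i)\vec{p}_0(i)=0$ and $\sum_i\phi_\star(i)\vec{p}_\tau(i)\ge d_{XY}^{\alpha_\epsilon}\langle\hat{n}_Y\rangle_{\varrho_\tau}/\mca N\ge\mu\,d_{XY}^{\alpha_\epsilon}$, so $\mca W_\epsilon(\vec{p}_0,\vec{p}_\tau)\ge\mu\,d_{XY}^{\alpha_\epsilon}$; the version phrased through the projectors $P_{\hat{n}_X\le N_0}$ follows from the same two endpoint inequalities. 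Combining with the upper bound from the previous step and solving for $\tau$ gives $\tau\ge[J\gamma\zeta(\alpha-\alpha_\epsilon-D+1)]^{-1}\mu\,d_{XY}^{\alpha_\epsilon}=\kappa_1^\epsilon d_{XY}^{\min(1,\alpha-D-\epsilon)}$. The step I expect to be the main obstacle is the middle one: showing that $\phi\mapsto\frac{d}{dt}\sum_i\phi(i)\langle\hat{n}_i\rangle_{\varrho_t}$ is controlled by the local boson density times a constant independent of both the system size and $d_{XY}$ as soon as $\phi$ is Lipschitz in the snowflake metric. This is exactly where the power law $\alpha$, the snowflake exponent $\alpha_\epsilon$, and lattice summability must be traded off against one another, and it is what forces $\alpha_\epsilon<\alpha-D$ (equivalently $\epsilon>0$, with $\zeta$ diverging as $\epsilon\downarrow0$); the commutator identity, the Cauchy--Schwarz/AM--GM estimate, and the Lipschitz/concavity properties of $\phi_\star$ are all routine.
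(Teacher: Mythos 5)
Your proposal is correct and follows essentially the same route as the paper's Appendix proof: the boson-concentration vector, the snowflake cost $c_{ij}=\|i-j\|^{\alpha_\epsilon}$, the continuity equation extracted from the von Neumann equation, the Cauchy--Schwarz/AM--GM estimate $|\phi_{ij}|\le|J_{ij}|(x_i+x_j)$, and the shell-decomposition bound $\sum_{j\neq i}\|i-j\|^{-(\alpha-\alpha_\epsilon)}\le\gamma\,\zeta(\alpha-\alpha_\epsilon-D+1)$ all appear there verbatim. The one step you do differently is the lower bound on the Wasserstein distance: you exhibit an explicit dual witness $\phi_\star=\min(\mathrm{dist}(\cdot,X),d_{XY})^{\alpha_\epsilon}$, whereas the paper manipulates the marginals of the optimal primal coupling to get $\sum_{i\in Y,j\in X}\pi_{ij}\ge\mu$; the two are equivalent, and your version is arguably cleaner. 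The only (easily repaired) inaccuracy is that the paper's formalization of ``a fraction $\mu$ is transported'' is $x_Y(\tau)\ge x_{X_c}(0)+\mu$, not ``all bosons start in $X$''; your argument covers this general case once you replace $\sum_i\phi_\star(i)\,p_0(i)=0$ by $\sum_i\phi_\star(i)\,p_0(i)\le d_{XY}^{\alpha_\epsilon}\,x_{X_c}(0)$, which your $\phi_\star$ gives for free since it vanishes on $X$ and is bounded by $d_{XY}^{\alpha_\epsilon}$ everywhere.
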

We briefly explain the role of the auxiliary variable $\epsilon$.
In simple terms, it is introduced to address the case of $D<\alpha\le D+1$. For $\alpha>D+1$, we can choose $\epsilon=\alpha-D-1$; thus, $\kappa_1^\epsilon=[J\gamma\zeta(\alpha-D)]^{-1}\mu$ and $\min(1,\alpha-D-\epsilon)=1$. 
In the case $D<\alpha\le D+1$, the bound \eqref{eq:speed.limit} explicitly reads
\begin{equation}
	\tau\ge\frac{\mu}{J\gamma\zeta(1+\epsilon)}d_{XY}^{\alpha-D-\epsilon}.
\end{equation}
As seen, $\epsilon$ can be arbitrarily small but should not be zero as the function $\zeta$ can diverge.

Some remarks on Theorem \ref{thm:tran.velo} are in order.
(i) First, macroscopic transport here means that an average number $\mu\mca{N}$ of bosons is transported from $X$ to $Y$, where $\mu$ is a $O(1)$ constant.
Since bosons are indistinguishable, a fraction $\mu$ of the total bosons must be effectively transported from region $X$ to region $Y$ if the following inequality is satisfied:
\begin{equation}\label{eq:mac.tran.def}
	x_Y(\tau)\ge x_{X_c}(0)+\mu,
\end{equation}
where $x_Z(t)$ denotes the boson concentration inside region $Z$ at time $t$.
Inequality \eqref{eq:mac.tran.def} can be interpreted as follows: Even if all bosons outside of $X$ are transported to $Y$, it is still insufficient to achieve the desired concentration; consequently, at least a fraction $\mu$ of the bosons must be effectively transported from $X$ to $Y$.
The result holds for arbitrary initial states, including both pure and mixed states, and for the entire range of the power decay $\alpha>D$.
Notably, this type of time constraint cannot be deduced by the conventional quantum speed limits \cite{Mandelstam.1945.JP,Margolus.1998.PD}.

(ii) Second, bound \eqref{eq:speed.limit} can be physically interpreted as follows.
For the case $\alpha>D+1$, the bound reads $\tau\ge O(d_{XY})$, implying that bosonic transport always takes time at least proportional to the distance.
Surprisingly, this implication is the same as in the case of short-range hopping \cite{Vu.2023.PRL.TSL}, indicating that long-range hopping and long-range interactions do not enhance the speed of macroscopic bosonic transport\footnote{Note that it may not be the case when considering the speed of information propagation.}.
On the other hand, in the case $D<\alpha\le D+1$, the minimum operational time is bounded in terms of the power decay $\alpha$ as $\tau\ge O(d_{XY}^{\alpha-D})$, indicating the effect of long-range hopping on speeding up bosonic transport.

(iii) Last, we discuss the optimality of the bound \eqref{eq:speed.limit} and show that it is optimal.
The optimality here is interpreted in the sense that the power coefficient associated with the distance term in the bound is optimal.
In other words, we need only show that there exist transport protocols such that bosonic transport can be accomplished within time $\tau=O(d_{XY}^{\min(1,\alpha-D)})$.
To this end, we first consider the case of a single particle (i.e., $\mca{N}=1$).
In this case, it was shown that one could construct a rapid state-transfer protocol such that the particle can be transferred between two sites of distance $d$ within time $\tau=O(d)$ for $\alpha>D+1$ and $\tau=O(d^{\alpha-D})$ for $D<\alpha\le D+1$ \cite{Tran.2020.PRX} (see Theorem 11 therein). Therefore, it can be concluded that the bound \eqref{eq:speed.limit} is optimal.
We note that while the form of the optimal light cone is the same as that obtained in Ref.~\cite{Tran.2020.PRX}, the previous study only considers noninteracting particles, which can be simplified to the problem of single free particle. In contrast, our result applies to interacting systems, where boson-boson interactions can be arbitrary and play a crucial role in facilitating bosonic transport (cf.~Lemmas \ref{lem:state.trans2} and \ref{lem:state.trans4}). Next, we show a protocol that saturates the bound for $\alpha>D+1$ in the general case of an extensive boson number (i.e., $\mca{N}\gg 1$). We consider a one-dimensional lattice of length $L$ for simplicity and assume that the initial state is a Mott state $\ket{\mca{N},0,\dots,0}$. Here, $\ket{m}\otimes\ket{n}$ is denoted by $\ket{m,n}$ for convenience. Initially, all bosons are concentrated at site $1$, and we construct a protocol that transports all bosons to site $L$ within time $\tau$. We can sequentially transport bosons within $L-1$ steps, where all bosons are transferred from site $k$ to site $k+1$ at each step $k=1,\dots,L-1$. The transformation in each step can be achieved as $\ket{\mca{N},0}\to\ket{\mca{N}-1,1}\to\ket{1,\mca{N}-1}\to\ket{0,\mca{N}}$. The time consumed in each step is equal to $\pi/(2J)+\pi/(J\sqrt{\mca{N}})=O(1)$ (cf.~Lemmas \ref{lem:state.trans3} and \ref{lem:state.trans4}). Thus, the total time for transporting $\mca{N}$ bosons over a distance of $L-1$ is $O(L)$.

Theorem \ref{thm:tran.velo} concerns the minimum time required to transport an average number of bosons between two regions.
In the sequel, we focus on the probability of finding a number of bosons occupied inside the target region within a finite time.
Our second main result, which is stated below, establishes an upper bound on this probability in terms of the operational time and the distance.
\begin{theorem}\label{thm:prob.bound}
Assume that the initial boson number outside of region $X$ does not exceed $N_0$, i.e., $\ev{P_{\hat{n}_{X_c}\le N_0}}_{\varrho_0}=1$.
Then, the probability of finding bosons inside a disjoint region $Y$ is upper bounded by the operational time $\tau$ and the distance between the two regions as
\begin{equation}\label{eq:prob.bound}
	\ev{P_{\hat{n}_{Y}\ge N_0+\Delta N_0}}_{\varrho_\tau}\le\frac{\kappa_{2}^\epsilon\mca{N}}{\Delta N_0}\frac{\tau}{d_{XY}^{\min(1,\alpha-D-\epsilon)}}.
\end{equation}
Here, $\mca{N}$ is the total number of bosons, $0<\epsilon<\alpha-D$ is an arbitrary number, and $\kappa_{2}^\epsilon\coloneqq J\gamma\zeta(\alpha-\alpha_\epsilon-D+1)$ is a constant independent of the system size.
\end{theorem}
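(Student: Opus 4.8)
The plan is to rerun the optimal-transport speed limit behind Theorem~\ref{thm:tran.velo}, but now applied to a \emph{nonlinear} dressing of a position-weighted number operator; this is the step that turns a bound on an expectation value into a bound on a probability. I would take the truncated power-law weight $\phi_i\coloneqq\min\!\big(\mathrm{dist}(i,X),d_{XY}\big)^{\alpha_\epsilon}$, with $\mathrm{dist}(i,X)\coloneqq\min_{k\in X}\|i-k\|$ and $\alpha_\epsilon=\min(1,\alpha-D-\epsilon)$. It vanishes on $X$, equals $d_{XY}^{\alpha_\epsilon}$ on $Y$, stays in $[0,d_{XY}^{\alpha_\epsilon}]$ everywhere, and satisfies $|\phi_i-\phi_j|\le\|i-j\|^{\alpha_\epsilon}$ (because $t\mapsto\min(t,d_{XY})^{\alpha_\epsilon}$ is $1$-Hölder of exponent $\alpha_\epsilon\le1$ and $\mathrm{dist}(\cdot,X)$ is $1$-Lipschitz). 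The quantity to track is $\ev{g(\hat n_\phi)}_{\varrho_t}$, where $\hat n_\phi\coloneqq\sum_{i\in\Lambda}\phi_i\hat n_i$ and $g(x)\coloneqq\max\!\big(x-d_{XY}^{\alpha_\epsilon}N_0,\,0\big)$ is nondecreasing and $1$-Lipschitz.

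Two elementary observations bracket this quantity. Since $\ev{P_{\hat n_{X_c}\le N_0}}_{\varrho_0}=1$, the state $\varrho_0$ is supported on Mott configurations with $\hat n_{X_c}\le N_0$, on which $\hat n_\phi=\sum_{i\notin X}\phi_i\hat n_i\le d_{XY}^{\alpha_\epsilon}\hat n_{X_c}\le d_{XY}^{\alpha_\epsilon}N_0$; hence $g(\hat n_\phi)=0$ there and $\ev{g(\hat n_\phi)}_{\varrho_0}=0$. On the other side, $\hat n_\phi$, $\hat n_Y$ and $g(\hat n_\phi)$ are simultaneously diagonal in the Mott basis, and since $\phi_i=d_{XY}^{\alpha_\epsilon}$ on $Y$ while $\phi_i\ge0$ elsewhere, a check eigenvalue by eigenvalue (using monotonicity of $g$) gives the operator bound $g(\hat n_\phi)\ge d_{XY}^{\alpha_\epsilon}(\hat n_Y-N_0)_+\ge d_{XY}^{\alpha_\epsilon}\,\Delta N_0\,P_{\hat n_Y\ge N_0+\Delta N_0}$, the last step being Markov's inequality.

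The core is the rate bound $\big|\tfrac{d}{dt}\ev{g(\hat n_\phi)}_{\varrho_t}\big|\le\kappa_2^\epsilon\mca{N}$, which I expect to follow from the primitive speed limit of Eq.~\eqref{eq:prim.sl}. The interaction terms $h_Z(t)$ are functions of $\{\hat n_i\}$, hence commute with $g(\hat n_\phi)$, so only the hopping contributes. Because $\hat b_i^\dagger\hat b_j$ shifts $\hat n_\phi$ by $\phi_i-\phi_j$, one has $[g(\hat n_\phi),\hat b_i^\dagger\hat b_j]=\hat b_i^\dagger\hat b_j\,D_{ij}$ with $D_{ij}\coloneqq g(\hat n_\phi+\phi_i-\phi_j)-g(\hat n_\phi)$ and $\|D_{ij}\|\le|\phi_i-\phi_j|$. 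A Cauchy--Schwarz step in the state inner product $(A,B)\mapsto\ev{A^\dagger B}_{\varrho_t}$, with $A=\hat b_i$ and $B=\hat b_j D_{ij}$, then bounds $|\ev{\hat b_i^\dagger\hat b_j D_{ij}}_{\varrho_t}|$ by $|\phi_i-\phi_j|\sqrt{\ev{\hat n_i}\ev{\hat n_j}}$---here one uses that $D_{ij}$, being a bounded function of the diagonal operator $\hat n_\phi$, commutes with $\hat n_j$, so that $\ev{D_{ij}^\dagger\hat n_j D_{ij}}\le|\phi_i-\phi_j|^2\ev{\hat n_j}$. Inserting $|J_{ij}|\le J/\|i-j\|^\alpha$, the Hölder estimate, $\sqrt{ab}\le(a+b)/2$, and particle conservation $\sum_i\ev{\hat n_i}=\mca{N}$ yields $\big|\tfrac{d}{dt}\ev{g(\hat n_\phi)}_{\varrho_t}\big|\le\mca{N}\sup_i\sum_{j\ne i}J\|i-j\|^{-(\alpha-\alpha_\epsilon)}\le\kappa_2^\epsilon\mca{N}$, the lattice sum converging because $\alpha-\alpha_\epsilon>D$ (equivalently $\alpha_\epsilon<\alpha-D$). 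Integrating over $[0,\tau]$ with vanishing initial value gives $\ev{g(\hat n_\phi)}_{\varrho_\tau}\le\kappa_2^\epsilon\mca{N}\tau$; combining with the operator lower bound and dividing by $d_{XY}^{\alpha_\epsilon}\Delta N_0$ reproduces Eq.~\eqref{eq:prob.bound} verbatim.

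I expect the main obstacle to sit entirely in that rate bound: one must control the off-diagonal expectation $\ev{\hat b_i^\dagger\hat b_j D_{ij}}$ for an \emph{unbounded} weighted operator and an \emph{indefinite} nonlinear dressing, and do so with a system-size-independent, dimensionally consistent constant, never invoking a bound on local occupation numbers. The way through is the remark above---$D_{ij}$ commutes with $\hat n_j$ because both are diagonal in the Mott basis---which lets Cauchy--Schwarz close against $\sqrt{\ev{\hat n_i}\ev{\hat n_j}}$ and produces the factor $|\phi_i-\phi_j|$ that, multiplied by the power-law hopping tail, gives the convergent $\zeta$-sum. Everything else runs parallel to the proof of Theorem~\ref{thm:tran.velo}; the genuinely new ingredients are the composition with the positive part and the observation that the hypothesis $\ev{P_{\hat n_{X_c}\le N_0}}_{\varrho_0}=1$ forces the initial value of this composed observable to vanish.
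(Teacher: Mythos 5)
Your argument is correct and lands on exactly the paper's constant $\kappa_2^\epsilon=J\gamma\zeta(\alpha-\alpha_\epsilon-D+1)$, but it takes a genuinely different route. The paper lifts the dynamics to the probability distribution $p_{\vec N}(t)$ over Mott configurations, equips configuration space with the shortest-path cost generated by $c_{\vec N\vec N'}=\|i-j\|^{\alpha_\epsilon}$ for single hops, and lower-bounds the resulting Wasserstein distance $\mca{W}(\vb*{p}_0,\vb*{p}_\tau)$ on the primal (coupling) side using the support sets $\mca{S}_0$ and $\mca{S}_\tau$. You stay entirely on the observable side: $\Phi(\vec N)\coloneqq g\big(\sum_i\phi_i n_i\big)$ with $\phi_i=\min(\mathrm{dist}(i,X),d_{XY})^{\alpha_\epsilon}$ is precisely an explicit Kantorovich potential for that configuration-space cost --- it changes by at most $\|i-j\|^{\alpha_\epsilon}$ under a single hop $j\to i$, vanishes on $\mca{S}_0$, and dominates $d_{XY}^{\alpha_\epsilon}\Delta N_0$ on $\mca{S}_\tau$ --- so integrating its rate of change replaces the duality and coupling arguments. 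The two estimates that do the real work are the same in both proofs: the Cauchy--Schwarz bound on the hopping current (your $|\ev{\hat b_i^\dagger\hat b_j D_{ij}}|\le|\phi_i-\phi_j|\sqrt{\ev{\hat n_i}\ev{\hat n_j}}$ plays the role of the paper's $|\varphi_{\vec N\vec N'}(t)|\le|J_{ij}(t)|[n_ip_{\vec N}(t)+n_j'p_{\vec N'}(t)]$) and the shell-decomposition zeta sum. What your version buys is self-containedness --- no minimization over couplings and no Kantorovich--Rubinstein lemma, just a first-derivative bound on a single monotone observable, closer in spirit to the localization observables of Ref.~\cite{Faupin.2022.PRL} but valid for all $\alpha>D$ and arbitrary mixed states; what the paper's formulation buys is a single optimal-transport template that also delivers Theorem~\ref{thm:tran.velo}, with the choice of cost as the only problem-specific ingredient.
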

This result holds under a generic setting for both initial states and system dynamics, same as in Theorem \ref{thm:tran.velo}.
It can be interpreted as follows.
Initially, there are at most $N_0$ bosons outside of $X$; therefore, in order to observe more than $N_0$ bosons inside $Y$, a certain number of bosons must be transferred from $X$ to $Y$.
Bound \eqref{eq:prob.bound} indicates that the probability of finding at least $N_0+\Delta N_0$ inside $Y$ is at most linear in time $\tau$ and inversely proportional to boson number $\Delta N_0$ and distance $d_{XY}$.
While the inequality \eqref{eq:prob.bound} is valid for any boson number $\Delta N_0$, it is particularly relevant and meaningful to discuss macroscopic transport [i.e., $\Delta N_0/\mca{N}\eqqcolon\mu$ is a $O(1)$ constant].
In this case, we obtain
\begin{equation}
	\ev{P_{\overline{n}_{Y}\ge\xi}}_{\varrho_\tau}\le\kappa_{2}^\epsilon\mu^{-1}\tau d_{XY}^{-\min(1,\alpha-D-\epsilon)},\label{eq:macro.prob.LB}
\end{equation}
where $\xi\coloneqq (N_0+\Delta N_0)/\mca{N}\ge x_{X_c}(0)+\mu$.
In the case $\alpha>D+2$, a similar probability constraint has been derived for pure states in the absence of long-range interactions \cite{Faupin.2022.PRL} [see Eq.~(7) therein], remaining an open question for the case of $D<\alpha\le D+2$.
Our result thus completely resolves this open problem of bosonic transport for generic long-range systems.

It is worth noting that, at the qualitative level, Theorem \ref{thm:tran.velo} can be viewed as a corollary of Theorem \ref{thm:prob.bound}. For arbitrary $\mu'\in(0,\mu)$, by applying Markov's inequality, we obtain
\begin{align}
	\ev{P_{\overline{n}_{Y}\le x_{X_c}(0)+\mu'}}_{\varrho_\tau}&=\ev{P_{\mbb{1}-\overline{n}_{Y}\ge 1-x_{X_c}(0)-\mu'}}_{\varrho_\tau}\notag\\
	&\le\frac{\ev{\mbb{1}-\overline{n}_{Y}}_{\varrho_\tau}}{1-x_{X_c}(0)-\mu'}\notag\\
	&=\frac{1-x_Y(\tau)}{1-x_{X_c}(0)-\mu'}.
\end{align}
Thus, the probability of finding bosons inside region $Y$ can be lower bounded as
\begin{equation}
	\ev{P_{\overline{n}_{Y}\ge x_{X_c}(0)+\mu'}}_{\varrho_\tau}\ge\frac{x_Y(\tau)-x_{X_c}(0)-\mu'}{1-x_{X_c}(0)-\mu'}.\label{eq:prob.lb}
\end{equation} 
When $x_Y(\tau)\ge x_{X_c}(0)+\mu$, using Eq.~\eqref{eq:prob.lb}, Theorem \ref{thm:prob.bound} can deduce a statement that is qualitatively equivalent to Theorem \ref{thm:tran.velo} as 
\begin{equation}
	\tau\ge\frac{\mu'(\mu-\mu')}{\mu[1-x_{X_c}(0)-\mu']}\kappa_1^\epsilon d_{XY}^{\min(1,\alpha-D-\epsilon)},\label{eq:thm2.cor}
\end{equation}
which can be optimized with $\mu'=1-x_{X_c}(0)-\sqrt{[1-x_{X_c}(0)][1-x_{X_c}(0)-\mu]}$.
While the bound \eqref{eq:thm2.cor} is looser than the bound \eqref{eq:speed.limit} in Theorem \ref{thm:tran.velo}, it captures the same fundamental limit regarding the speed of bosonic transport.

It is worthwhile to investigate the extent to which our results can be extended.
We demonstrate that our main results cannot hold in more general classes of quantum many-boson systems. 
This implies that, in these cases, particle transport at \emph{supersonic} speeds is possible.
Eisert and Gross \cite{Eisert.2009.PRL} have previously established a well-known result that showcases an \emph{exponential} acceleration in quantum communication. 
However, the model used in their study was highly artificial, and there may be more naturally occurring categories of many-boson systems that could potentially exhibit finite-speed quantum transport. 
Our objective here is to illustrate that even by making slight modifications to the Bose-Hubbard model, we can observe an \emph{infinite} acceleration in quantum propagation.

In the sequel, we show that the speed of macroscopic bosonic transport can be supersonic in systems with interaction-induced tunneling terms \cite{Sowinski.2012.PRL}.
These systems have undergone intensive theoretical investigation \cite{Dutta.2015.RPP} and have been observed in experiments \cite{Jurgensen.2014.PRL}.
\begin{theorem}\label{thm:inf.speed}
The speed of particle transport can be infinite in a one-dimensional bosonic system of size $L$ with the time-dependent Hamiltonian given by
\begin{align}
	H_t&=\sum_{\ev{i,j}}J_{ij}(t)\hat{b}_i^\dagger \hat{b}_j+\sum_{\ev{i,j}}T_{ij}(t)\hat{b}_i^\dagger(\hat{n}_i+\hat{n}_j)\hat{b}_j\notag\\
 &+\frac{1}{2}\sum_{i}U_i(t)\hat{n}_i(\hat{n}_i-1)+\sum_{i}\mu_i(t)\hat{n}_i.
 \label{eq:tunneling.Hamiltonian.BH}
\end{align}
Here, $\ev{i,j}$ denotes an ordered pair of neighboring sites (i.e., $|i-j|=1$).
\end{theorem}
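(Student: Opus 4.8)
The plan is to exhibit an explicit control protocol on the chain $\Lambda=\{1,\dots,L\}$ that sweeps a macroscopic number of bosons from one end to the other in an operational time that shrinks to zero as the total particle number $\mca{N}$ grows, so that no finite transport speed can survive. I would start from the Mott initial state $\ket{\mca{N},0,\dots,0}$, take $X=\{1\}$ and $Y=\{L\}$ (hence $d_{XY}=L-1$), and drive the system with a piecewise-constant protocol consisting of $L-1$ windows: in the $k$-th window I set $J_{ij}(t)=U_i(t)=\mu_i(t)=0$ and activate only the bond-charge term on the single bond $(k,k+1)$, with $T_{k,k+1}(t)=T$ for a fixed constant $T>0$ and all other couplings zero.

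The crucial point is that on the relevant invariant sector the density-dependent tunneling is nothing but an ordinary hopping whose rate is amplified by $\mca{N}$. Indeed, since only the bond $(k,k+1)$ is active and the state entering the $k$-th window has all $\mca{N}$ bosons on site $k$, the evolution during that window stays inside the span of $\{\ket{m,\mca{N}-m}_{k,k+1}\}_{m=0}^{\mca{N}}$ (tensored with the vacuum on the remaining sites). On this subspace one checks directly that $\hat{b}_k^\dagger(\hat{n}_k+\hat{n}_{k+1})\hat{b}_{k+1}=(\mca{N}-1)\,\hat{b}_k^\dagger \hat{b}_{k+1}$, because the intermediate state produced after annihilating one boson on site $k+1$ always carries $\mca{N}-1$ particles. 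Hence the generator in the $k$-th window reduces to $H^{(k)}=g\,(\hat{b}_k^\dagger \hat{b}_{k+1}+\hat{b}_{k+1}^\dagger \hat{b}_k)$ with $g=(\mca{N}-1)T$. I would then invoke the elementary beam-splitter identity $e^{-i\theta(\hat{b}_k^\dagger \hat{b}_{k+1}+\hat{b}_{k+1}^\dagger \hat{b}_k)}\,\hat{b}_k^\dagger\,e^{i\theta(\hat{b}_k^\dagger \hat{b}_{k+1}+\hat{b}_{k+1}^\dagger \hat{b}_k)}=\cos\theta\,\hat{b}_k^\dagger-i\sin\theta\,\hat{b}_{k+1}^\dagger$ (obtained from the Heisenberg equations, which close as $A''=-A$) to deduce that running $H^{(k)}$ for the time $t_{\mathrm{step}}\coloneqq\pi/(2g)=\pi/[2(\mca{N}-1)T]$ corresponds to $\theta=\pi/2$ and therefore maps $\ket{m,0}_{k,k+1}\mapsto(-i)^m\ket{0,m}_{k,k+1}$; in particular it carries all $\mca{N}$ bosons from site $k$ to site $k+1$ up to a global phase. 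Equivalently, in the Schwinger-boson representation $H^{(k)}$ is $2g\,S_x$ on a spin-$\mca{N}/2$ multiplet, and $t_{\mathrm{step}}$ implements the $\pi$ rotation taking the highest-weight state to the lowest-weight one.

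Concatenating the $L-1$ windows one after another takes $\ket{\mca{N},0,\dots,0}$ to $\ket{0,\dots,0,\mca{N}}$ up to a phase, within total operational time $\tau=(L-1)\,t_{\mathrm{step}}=\pi(L-1)/[2(\mca{N}-1)T]$. This is a macroscopic transport with $\mu=1$ in the sense of Eq.~\eqref{eq:mac.tran.def}, since $x_Y(\tau)=1\ge x_{X_c}(0)+1$. Consequently, for any target time $\tau_\ast>0$ and any distance $d=L-1$, choosing $\mca{N}\ge 1+\pi(L-1)/(2\tau_\ast T)$ already gives $\tau\le\tau_\ast$; letting $L\to\infty$ with $\mca{N}$ scaled accordingly (e.g.\ $\mca{N}\sim L^2$) makes the transport time vanish while the distance diverges, so the ratio $d_{XY}/\tau$ is unbounded. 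This is precisely the claimed infinite transport speed.

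I expect the single non-routine ingredient to be the reduction in the second step: recognizing that $\hat{n}_k+\hat{n}_{k+1}$ acts as the c-number $\mca{N}-1$ on the two-site fixed-number sector, which turns the interaction-induced tunneling into a hopping with a state-dependent, $\mca{N}$-scaling rate. Everything downstream---the beam-splitter identity, the phase bookkeeping, and the concatenation of pulses---is standard, and notably no unbounded control is used. I would close by remarking explicitly that this does not contradict Theorems \ref{thm:tran.velo} and \ref{thm:prob.bound}: their hypotheses require the hopping amplitudes to obey $|J_{ij}(t)|\le J/\|i-j\|^\alpha$ with a size-independent constant $J$, whereas the effective amplitude $(\mca{N}-1)T$ grows without bound with the system, which is exactly the loophole exploited here.
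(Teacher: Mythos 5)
Your protocol is correct, and it reaches the paper's conclusion by a genuinely leaner route. The paper uses the same overall skeleton---a bond-by-bond sweep of all bosons from site $1$ to site $L$ in constant total time---but its per-bond subprotocol has three stages ($\ket{M,0}\to\ket{M-1,1}\to\ket{1,M-1}\to\ket{0,M}$), mixes the bare hopping with the density-dependent tunneling, invokes a $U\to+\infty$ on-site potential for the first and last stages (Lemma~\ref{lem:state.trans2}), and certifies the middle stage through a Sylvester--Kac spectral computation (Lemma~\ref{lem:state.trans}). Your key observation---that on the two-site fixed-number sector $\hat b_k^\dagger(\hat n_k+\hat n_{k+1})\hat b_{k+1}=(\mca{N}-1)\,\hat b_k^\dagger\hat b_{k+1}$ exactly, so the interaction-induced tunneling alone is an $\mca{N}$-amplified beam splitter and a single $\pi/2$ pulse of duration $\pi/[2(\mca{N}-1)T]$ empties the bond---collapses all of that into one elementary step with only bounded controls and no unbounded $U$. (In fact the same identity shows the paper's $H_2$ equals $JM$ times a beam splitter on that sector, so your argument even subsumes their middle stage.) Two small remarks: for Hermiticity you should state that both ordered amplitudes $T_{k,k+1}=T_{k+1,k}=T$ are switched on, which your effective $H^{(k)}=g(\hat b_k^\dagger\hat b_{k+1}+\hat b_{k+1}^\dagger\hat b_k)$ implicitly assumes; and the inflation $\mca{N}\sim L^2$ is unnecessary---with $\mca{N}=L$, as in the paper, your total time is already the constant $\pi/(2T)$ while $d_{XY}=L-1\to\infty$, which suffices for the claim of unbounded transport speed. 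Your closing remark correctly identifies why this does not contradict Theorems~\ref{thm:tran.velo} and~\ref{thm:prob.bound}: the term $\hat b_i^\dagger(\hat n_i+\hat n_j)\hat b_j$ lies outside the Hamiltonian class assumed there, acting as a hopping whose effective amplitude grows with the particle number.
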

We provide a proof outline of Theorem \ref{thm:inf.speed}.
We prove by systematically constructing a protocol that transfers a macroscopic number of bosons to a distant region within a constant of time.
Let the initial state be a Mott state of the form $\ket{\psi_0}=\ket{L,0,\dots,0}$; that is, there are $L$ bosons at site $1$ and no boson at any other site $s\in[2,L]$.
We consider a control protocol that transfers the initial state to the final state $\ket{0,\dots,0,L}$ (i.e., all bosons are transported to site $L$ within time $\tau$).
The protocol is constructed as follows.

At each step $k=1,\dots,L-1$, all the hopping and interaction terms are set to zero, except for sites $k$ and $k+1$.
In other words, the bosonic system effectively becomes a two-site system consisting of sites $k$ and $k+1$.
By performing a subprotocol described below, the state will be transformed as $\ket{L,0}\to\ket{0,L}$.
Consequently, after the final step, the system state $\ket{\psi_\tau}$ will become $\ket{0,\dots,0,L}$ as desired.

The subprotocol that performs the state transformation $\ket{L,0}\to\ket{0,L}$ is composed of three stages, where the first performs $\ket{L,0}\to\ket{L-1,1}$, the second transforms $\ket{L-1,1}\to\ket{1,L-1}$, and the last performs $\ket{1,L-1}\to\ket{0,L}$\footnote{There exists another simple protocol that sequentially transports bosons as $\ket{L,0}\to\ket{L-1,1}\to\ket{L-2,2}\to\dots\to\ket{0,L}$. In this case, the total time required to transport $L$ bosons over a distance of $L-1$ in the thermodynamic limit $L\to\infty$ is given by $\pi^2/(2J)$ (cf.~Lemma \ref{lem:state.trans2}).}.
The Hamiltonian in the first stage is given in the following form:
\begin{align}
	H_1&=J(\hat b_k^\dagger \hat b_{k+1} + \hat b_{k+1}^\dagger \hat b_{k})-U\hat{n}_{k}^2+U(2L-1)\hat{n}_{k}\notag\\
	&+J\qty[\hat b_{k+1}^\dagger(\hat{n}_{k+1}+\hat{n}_{k})\hat b_k+\hat b_{k}^\dagger(\hat{n}_{k+1}+\hat{n}_{k})\hat b_{k+1}].
\end{align}
As proved in Lemma \ref{lem:state.trans2}, the transformation $\ket{L,0}\to\ket{L-1,1}$ can be accomplished within time $\pi/(2JL\sqrt{L})$ by the above Hamiltonian with the $U\to\infty$ limit.
In the second stage, the Hamiltonian reads
\begin{align}
	H_2&=J(\hat b_k^\dagger\hat b_{k+1} + \hat b_{k+1}^\dagger \hat b_{k})\notag\\
	&+J\qty[\hat b_{k+1}^\dagger(\hat{n}_{k+1}+\hat{n}_{k})\hat b_k+\hat b_{k}^\dagger(\hat{n}_{k+1}+\hat{n}_{k})\hat b_{k+1}].
\end{align}
As shown in Lemma \ref{lem:state.trans}, the transformation $\ket{L-1,1}\to\ket{1,L-1}$ can be attained within time $\pi/(2JL)$ by this Hamiltonian.
In the last stage, the Hamiltonian is given by
\begin{align}
	H_3&=J(\hat b_k^\dagger \hat b_{k+1} + \hat b_{k+1}^\dagger \hat b_{k})-U\hat{n}_{k+1}^2+U(2L-1)\hat{n}_{k+1}\notag\\
	&+J\qty[\hat b_{k+1}^\dagger(\hat{n}_{k+1}+\hat{n}_{k})\hat b_k+\hat b_{k}^\dagger(\hat{n}_{k+1}+\hat{n}_{k})\hat b_{k+1}].
\end{align}
As proved in Lemma \ref{lem:state.trans2}, the transformation $\ket{1,L-1}\to\ket{0,L}$ can be achieved within time $\pi/(2JL\sqrt{L})$ by the Hamiltonian $H_3$ in the $U\to\infty$ limit.

For the protocol constructed above, the total time consumed for the bosonic transport is
\begin{equation}
	\tau=\frac{\pi}{2J}\sum_{k=1}^{L-1}\qty(\frac{1}{L}+\frac{2}{L\sqrt{L}})<\frac{\pi}{J}
\end{equation}
for any $L\ge 3$.
As can be observed, macroscopic transport can always be accomplished within a constant of time, independent of the distance between the regions.
Therefore, the speed of macroscopic bosonic transport is not finite in bosonic systems with interaction-induced tunneling terms.

\section{Conclusion}
Using optimal transport theory, we elucidated fundamental limits on the speed of macroscopic particle transport in long-range bosonic systems from the perspective of the quantum speed limit.
The results hold for the entire range $\alpha>D$ of power decay and arbitrary initial states under generic long-range characteristics.
Our findings completely resolve the critical open problem of the speed of macroscopic particle transport.
It is worth noting that the speed of bosonic transport may not be finite in the presence of higher-order hopping terms (i.e., terms like $\hat b_i\hat b_j\hat b_k^\dagger\hat b_l^\dagger$), such as interaction-induced tunneling terms in Eq.~\eqref{eq:tunneling.Hamiltonian.BH}.
We also note that our study does not delve into the exploration of the Lieb-Robinson bound concerning the speed of information propagation. This crucial aspect remains unaddressed and is reserved for future research.

\begin{acknowledgements}
The authors thank Marius Lemm for the helpful communication.
{T.V.V.} was supported by JSPS KAKENHI Grant No.~JP23K13032. {T.K.} acknowledges Hakubi projects of RIKEN and was supported by JST, PRESTO Grant No.~JPMJPR2116. {K.S.} was supported by JSPS KAKENHI Grant No.~JP23K25796.
\end{acknowledgements}

\bibliographystyle{quantum}

\onecolumn
\appendix

\section{Methods}\label{app:method}
\subsection{Unified speed limit based on optimal transport theory}
The key ingredient in the proofs of our results is a novel unified speed limit.
This speed limit is derived using optimal transport theory, which plays crucial roles in various disciplines, such as statistics and machine learning \cite{Kolouri.2017.SPM}, computer vision \cite{Haker.2004.IJCV}, linguistics \cite{Huang.2016.NIPS}, molecular biology \cite{Schiebinger.2019.Cell}, and stochastic thermodynamics \cite{Aurell.2011.PRL,Nakazato.2021.PRR,Dechant.2022.JPA,Vu.2023.PRX}.
Here we describe the unified speed limit and its derivation.

First, we begin with a very general setup, which is applicable to both classical and quantum systems. We consider a time evolution of a vector state $\vb*{x}_t\coloneqq[x_1(t),\dots,x_N(t)]^\top\in\mbb{R}_{\ge 0}^N$ over an undirected graph $\mca{G}(V,E)$ with the vertex set $V=\{1,\dots,N\}$ and edge set $E=\{(i,j)\}$.
For each vertex $i$, let $B_i\coloneqq\{j\,|\,(i,j)\in E\}$ denote the set of neighboring vertices of $i$.
The dynamics of $\vb*{x}_t$ is described by the following deterministic equation:
\begin{equation}\label{eq:det.dyn}
	\dot{x}_i(t)=\sum_{j\in B_i}f_{ij}(t),
\end{equation}
where $f_{ij}(t)=-f_{ji}(t)$ denotes the flow exchange between vertices $i$ and $j$ for $i\neq j$.
Note that the sum $\sum_{i=1}^Nx_i(t)$ is time-invariant.

Next, we introduce the discrete $L^1$-Wasserstein distance between two distributions $\vb*{x}$ and $\vb*{y}$.
Suppose that we have a transport plan that redistributes $\vb*{x}$ to $\vb*{y}$ by sending an amount of $\pi_{ij}\ge 0$ from $x_j$ to $y_i$ with a cost of $c_{ij}\ge 0$ per unit mass for any $i$ and $j$.
Here, $\pi=[\pi_{ij}]\in\mbb{R}_{\ge 0}^{N\times N}$ is a joint probability distribution such that its marginal distributions reduce to $\sum_{j}\pi_{ij}=y_i$ and $\sum_{j}\pi_{ji}=x_i$, defining an admissible transport plan.
The $L^1$-Wasserstein distance is then defined as the minimum transport cost over all feasible plans, given by
\begin{equation}
	\mca{W}(\vb*{x},\vb*{y})\coloneqq\min_{\pi\in\mca{C}(\vb*{x},\vb*{y})}\sum_{i,j}c_{ij}\pi_{ij},
\end{equation}
where $\mca{C}(\vb*{x},\vb*{y})$ denotes the set of couplings $\pi$.
The cost matrix $[c_{ij}]$ can be specified arbitrarily, as long as the following two requirements are fulfilled: (i) symmetry ($c_{ij}=c_{ji}~\forall i,j$) and (ii) the triangle inequality ($c_{ij}+c_{jk}\ge c_{ik} \forall i,j,k$).
The freedom to set up the cost matrix is one of the powerful aspects of the Wasserstein distance. This freedom enables us to include the geometric structure of the underlying dynamics, leading to the derivation of novel physical implications.

Given the setup above, by generalizing the speed limit obtained in Ref.~\cite{Vu.2023.PRL.TSL}, we prove that the minimum time required to transform state $\vb*{x}_0$ into $\vb*{x}_\tau$ is always lower bounded as
\begin{equation}\label{eq:prim.sl}
	\tau\ge\frac{\mca{W}(\vb*{x}_0,\vb*{x}_\tau)}{\ev{\sum_{(i,j)\in E}c_{ij}|f_{ij}(t)|}_\tau},
\end{equation}
where $\ev{z_t}_\tau\coloneqq\tau^{-1}\int_0^\tau\dd{t}z_t$ is the time-average quantity of $z_t$.
The unified speed limit \eqref{eq:prim.sl} includes all the essence to extend the use of the conventional quantum speed limits to a wide range of quantum many-body problems with various geometric structures.
The crucial step in applying this bound effectively is the thoughtful selection of the cost matrix $[c_{ij}]$. With appropriate choices, our new bound \eqref{eq:prim.sl} can play roles similar to the Lieb-Robinson bound.
In the specific case when $c_{ij}=1$ for any $(i,j)\in E$ and $c_{ij}$ represents the shortest-path distance between vertices $i$ and $j$ otherwise, the unified speed limit recovers the topological speed limit \cite{Vu.2023.PRL.TSL}.

The unified speed limit can be derived with the help of the Kantorovich-Rubinstein duality \cite{Villani.2008}, which expresses the Wasserstein distance in an alternative variational form as (see Lemma \ref{lem:KR.dual} for the proof)
\begin{equation}
	\mca{W}(\vb*{x},\vb*{y})=\max_{\vb*{\phi}}\vb*{\phi}^\top(\vb*{x}-\vb*{y}), 
\end{equation}
where the maximum is over all vectors $\vb*{\phi}=[\phi_1,\dots,\phi_N]^\top$ satisfying $|\phi_i-\phi_j|\le c_{ij}~\forall i,j$.
Notice that $x_i(\tau)=x_i(0)+\int_0^\tau\dd{t}\sum_{j\in B_i}f_{ij}(t)$.
Then, by applying the Kantorovich-Rubinstein duality, we can prove Eq.~\eqref{eq:prim.sl} as follows:
\begin{align}
	\mca{W}(\vb*{x}_0,\vb*{x}_\tau)&=\max_{\vb*{\phi}}\vb*{\phi}^\top(\vb*{x}_\tau-\vb*{x}_0)\notag\\
	&=\max_{\vb*{\phi}}{\sum_i\phi_i\int_0^\tau\dd{t}\sum_{j\in B_i}f_{ij}(t)}\notag\\
	&=\max_{\vb*{\phi}}{\sum_{(i,j)\in E}(\phi_i-\phi_j)\int_0^\tau\dd{t}f_{ij}(t)}\notag\\
	&\le\max_{\vb*{\phi}}\sum_{(i,j)\in E}|\phi_i-\phi_j|\int_0^\tau\dd{t}|f_{ij}(t)|\notag\\
	&\le\int_0^\tau\dd{t}\sum_{(i,j)\in E}c_{ij}|f_{ij}(t)|\notag\\
	&=\tau\Big\langle{\sum_{(i,j)\in E}c_{ij}|f_{ij}(t)|}\Big\rangle_\tau.
\end{align}
In what follows, we apply the unified speed limit to prove our main results, Theorems \ref{thm:tran.velo} and \ref{thm:prob.bound}.

\subsection{Proof of Theorem \ref{thm:tran.velo}}
First, we explain some useful notations that will be used later.
For any site $i$, we denote by $i[r]$ the ball of radius $r\ge 0$, which contains all sites $j$ such that $\|i-j\|<r$.
Using this definition, the set $\Lambda\setminus\{i\}$ can be decomposed as
\begin{equation}
	\Lambda\setminus\{i\}=\cup_{\ell=1}^\infty (i[\ell+1]\setminus i[\ell]).
\end{equation}
In addition, we define $\gamma$ as a constant such that the following inequality holds for any $r\ge 0$:
\begin{equation}\label{eq:gamma.def}
	|i[r+1]\setminus i[r]|\le\gamma r^{D-1}.
\end{equation}

We consider a $|\Lambda|$-dimensional vector $\vb*{x}_t$, where $x_i(t)$ denotes the boson concentration occupied at site $i$ [i.e., $x_i(t)\coloneqq\tr{\overline{n}_i\varrho_t}$].
Evidently, $\sum_ix_i(t)=1$ for any $t$.
To define the Wasserstein distance between these vectors, the cost matrix $[c_{ij}]$ needs to be specified. We consider the following cost matrix:
\begin{equation}
	c_{ij}=\|i-j\|^{\alpha_\epsilon},
\end{equation}
where $0<\epsilon<\alpha-D$ is an arbitrary number and $\alpha_\epsilon\coloneqq\min(1,\alpha-D-\epsilon)$ is defined for convenience.
Since $\|i-j\|+\|j-k\|\ge\|i-k\|$ and $a^{z}+b^{z}\ge (a+b)^{z}$ for any $a,b\ge 0$ and $0\le z\le 1$, we can verify that
\begin{equation}
	c_{ij}+c_{jk}\ge c_{ik}
\end{equation}
for any $i,j,k\in\Lambda$.
As can be observed, this distance includes spatial information of the lattice and can be the order of the system size.

We are now ready to prove Theorem \ref{thm:tran.velo}.
Using the relation $[\hat{b}_i,\hat n_i]=\hat{b}_i$, we can show that the time evolution of $x_i(t)$ is governed by the following equation:
\begin{equation}\label{eq:det.eq}
	\dot{x}_i(t)=\sum_{j(\neq i)}2J_{ij}(t)\mca{N}^{-1}\Im\qty[\tr{\hat{b}_i^\dagger \hat{b}_j\varrho_t}]\eqqcolon\sum_{j(\neq i)}\phi_{ij}(t).
\end{equation}
It can be easily verified that $\phi_{ij}(t)=-\phi_{ji}(t)$.
Therefore, the dynamics of $\vb*{x}_t$ can be mapped to Eq.~\eqref{eq:det.dyn}, where the graph $\mca{G}$ is determined by $V=\Lambda$ and $B_i=\Lambda\setminus\{i\}$.
Applying the unified speed limit \eqref{eq:prim.sl} yields
\begin{equation}\label{eq:tmp1}
	\tau\ge\frac{\mca{W}(\vb*{x}_0,\vb*{x}_\tau)}{\ev{\Phi_t}_\tau},
\end{equation}
where we define $\Phi_t\coloneqq (1/2)\sum_{i\neq j}c_{ij}|\phi_{ij}(t)|$.
Since a fraction $\mu$ of bosons must be transported from $X$ to $Y$ [i.e., the inequality \eqref{eq:mac.tran.def} is satisfied], we have
\begin{align}
	\mu\le x_{Y}(\tau)-x_{X_c}(0)&=\sum_{i\in Y,j\in\Lambda}\pi_{ij}-\sum_{i\in\Lambda,j\in X_c}\pi_{ij}\notag\\
	&=\sum_{i\in Y,j\in X}\pi_{ij}-\sum_{i\in Y_c,j\in X_c}\pi_{ij}\notag\\
	&\le \sum_{i\in Y,j\in X}\pi_{ij}.
\end{align}
Therefore,
\begin{align}
	\mca{W}(\vb*{x}_0,\vb*{x}_\tau)\ge \qty(\min_{i\in Y,j\in X}c_{ij})\sum_{i\in Y,j\in X}\pi_{ij}\ge\mu d_{XY}^{\alpha_\epsilon}.\label{eq:tmp2}
\end{align}
On the other hand, by applying the Cauchy-Schwarz inequality, $\phi_{ij}(t)$ can be upper bounded as
\begin{equation}
	|\phi_{ij}(t)|\le |J_{ij}(t)|[{x}_i(t)+{x}_j(t)].
\end{equation}
Consequently, the velocity term $\Phi_t$ is bounded from above as follows:
\begin{align}
	\Phi_t&\le\sum_{i}{x}_i(t)\sum_{j(\neq i)}|J_{ij}(t)|\|i-j\|^{\alpha_\epsilon}\notag\\
	&\le\sum_{i}{x}_i(t)\sum_{\ell=1}^{\infty}\sum_{j\in(i[\ell+1]\setminus i[\ell])}\frac{J}{\|i-j\|^{\alpha-\alpha_\epsilon}}\notag\\
	&\le J\gamma\sum_{i}{x}_i(t)\sum_{\ell=1}^{\infty}\frac{1}{\ell^{\alpha-\alpha_\epsilon-D+1}}\notag\\
	&= J\gamma\zeta(\alpha-\alpha_\epsilon-D+1).\label{eq:tmp3}
\end{align}
Here, we use the facts that $\|i-j\|\ge\ell$ for any $j\in(i[\ell+1]\setminus i[\ell])$ and $|i[\ell+1]\setminus i[\ell]|\le\gamma\ell^{D-1}$ in the third line, and $\zeta(\cdot)$ denotes the Riemann zeta function.
Define $\kappa_{1}^\epsilon\coloneqq[J\gamma\zeta(\alpha-\alpha_\epsilon-D+1)]^{-1}\mu$. Note that $\kappa_{1}^\epsilon$ is independent of the system size and finite because $\alpha-\alpha_\epsilon-D+1\ge 1+\epsilon$.
Combining Eqs.~\eqref{eq:tmp1}, \eqref{eq:tmp2}, and \eqref{eq:tmp3} yields
\begin{equation}
	\tau\ge\kappa_{1}^\epsilon d_{XY}^{\alpha_\epsilon},
\end{equation}
which completes the proof.

\subsection{Proof of Theorem \ref{thm:prob.bound}}
The proof strategy is similar to that of Theorem \ref{thm:tran.velo}.
The different point is that we consider the probability distribution of boson numbers at all sites instead of the vector of average boson concentrations.
Define the probability $p_{\vec{N}}(t)\coloneqq\tr\{\Pi_{\vec{N}}\varrho_t\}$.
Since the total boson number is invariant, we need only consider feasible states $\mket{\vec{N}}$ that satisfy $\tr\{\hat n_{\Lambda}\Pi_{\vec{N}}\}=\mca{N}$.
The time evolution of the probability distribution $\vb*{p}_t=[p_{\vec{N}}(t)]^\top$ can be derived from the von Neumann equation as
\begin{align}
	\dot p_{\vec{N}}(t)&=-i\tr{\Pi_{\vec{N}}[H_t,\varrho_t]}\notag\\
	&=\sum_{i\neq j}iJ_{ij}(t)\sqrt{n_in_j'}\qty(\mel{\vec{N}}{\varrho_t}{\vec{N}'}-\mel{\vec{N}'}{\varrho_t}{\vec{N}})\notag\\
	&\eqqcolon\sum_{i\neq j}\varphi_{\vec{N}\vec{N}'}(t),
\end{align}
where $\vec{N}'$ is obtained from $\vec{N}$ by setting $n_i'=n_{i}-1$, $n_j'=n_{j}+1$, and $n_k'=n_k$ for all $k\neq i,j$.
Note that $\varphi_{\vec{N}\vec{N}'}(t)=-\varphi_{\vec{N}'\vec{N}}(t)\in\mbb{R}$.
Two states $\vec{N}$ and $\vec{M}$ are said to be neighboring if there exist $i\neq j$ such that $|n_i-m_i|=|n_j-m_j|=1$ and $n_k=m_k$ for any $k\neq i,j$.
For such neighboring states, the transport cost is defined as $c_{\vec{N}\vec{N}'}=\|i-j\|^{\alpha_\epsilon}$.
For arbitrary states $\vec{N}$ and $\vec{M}$, the cost is defined as the shortest-path cost over all possible paths connecting the two states,
\begin{equation}
	c_{\vec{M}\vec{N}}=\min\sum_{k=1}^Kc_{\vec{N}_{k}\vec{N}_{k-1}},
\end{equation}
where $\vec{N}_0=\vec{N}$, $\vec{N}_K=\vec{M}$, and $\vec{N}_{k-1}$ and $\vec{N}_k$ are neighboring states for all $1\le k\le K$.
Obviously, $c_{\vec{N}\vec{N}}=0$.
Using the defined shortest-path costs, we consider the following Wasserstein distance:
\begin{equation}
	\mca{W}(\vb*{p},\vb*{q})\coloneqq\min_{\pi\in\mca{C}(\vb*{p},\vb*{q})}\sum_{\vec{M},\vec{N}}c_{\vec{M}\vec{N}}\pi_{\vec{M}\vec{N}}.
\end{equation}
Utilizing the unified speed limit \eqref{eq:prim.sl} for this Wasserstein distance, we can arrive at the following inequality:
\begin{equation}\label{eq:thm2.tmp1}
	\frac{1}{2}\int_0^\tau\dd{t}\sum_{\vec{N}}\sum_{i\neq j}c_{\vec{N}\vec{N}'}|\varphi_{\vec{N}\vec{N}'}(t)|\ge\mca{W}(\vb*{p}_0,\vb*{p}_\tau).
\end{equation}
Furthermore, applying the Cauchy-Schwarz inequality, we obtain
\begin{equation}
	|\varphi_{\vec{N}\vec{N}'}(t)|\le{|J_{ij}(t)|}\qty[n_ip_{\vec{N}}(t)+n_j'p_{\vec{N}'}(t)].
\end{equation}
Consequently, the left-hand side of Eq.~\eqref{eq:thm2.tmp1} can be upper bounded as
\begin{align}
	\frac{1}{2}\sum_{\vec{N}}\sum_{i\neq j}c_{\vec{N}\vec{N}'}|\varphi_{\vec{N}\vec{N}'}(t)|&\le\sum_{\vec{N}}p_{\vec{N}}(t)\sum_{i}n_i\sum_{j(\neq i)}\frac{J}{\|i-j\|^{\alpha-\alpha_\epsilon}}\notag\\
	&\le \sum_{\vec{N}}p_{\vec{N}}(t)\mca{N}J\gamma\zeta(\alpha-\alpha_\epsilon-D+1)\notag\\
	&=\mca{N}J\gamma\zeta(\alpha-\alpha_\epsilon-D+1).\label{eq:thm2.tmp2}
\end{align}
Define $\mca{S}_0=\{\vec{N}\,|\,\sum_{i\in X_c}n_i\le N_0\}$ and $\mca{S}_\tau=\{\vec{N}\,|\,\sum_{i\in Y}n_i\ge N_0+\Delta N_0\}$.
To reach state $\vec{M}\in\mca{S}_\tau$ from state $\vec{N}\in\mca{S}_0$ by hopping bosons between sites, an amount of $\Delta N_0$ bosons must be transferred from sites of $X$ to sites of $Y$.
Let $\vec{N}_{i_0}\to\vec{N}_{i_1}\to\dots\to \vec{N}_{i_L}$ be a sequence of states that transfers a boson from site $i_0\in X$ to site $i_L\in Y$ by iteratively hopping bosons between sites $i_{l-1}$ and $i_l$ for $1\le l\le L$. Then, by applying the triangle inequality, we obtain that the cost of transferring one boson is lower bounded as
\begin{equation}
	\sum_{l=1}^Lc_{\vec{N}_{i_l}\vec{N}_{i_{l-1}}}=\sum_{l=1}^L\|i_l-i_{l-1}\|^{\alpha_\epsilon}\ge\|i_L-i_0\|^{\alpha_\epsilon}\ge d_{XY}^{\alpha_\epsilon}.
\end{equation}
Therefore, we immediately obtain $c_{\vec{M}\vec{N}}\ge\Delta N_0d_{XY}^{\alpha_\epsilon}$ for any $\vec{M}\in\mca{S}_\tau$ and $\vec{N}\in\mca{S}_0$.
Since $\ev{P_{\hat{n}_{X_c}>N_0}}_{\varrho_0}=0$, $p_{\vec{N}}(0)=0$ for any $\vec{N}\notin\mca{S}_0$.
Combining this with $p_{\vec{N}}(0)=\sum_{\vec{M}}\pi_{\vec{M}\vec{N}}$ immediately derives $\pi_{\vec{M}\vec{N}}=0$ for any $\vec{N}\notin\mca{S}_0$ and $\vec{M}$. 
Using these facts, the Wasserstein distance can be lower bounded as follows:
\begin{align}
		\mca{W}(\vb*{p}_0,\vb*{p}_\tau)&=\min_{\pi}\sum_{\vec{M},\vec{N}}c_{\vec{M}\vec{N}}\pi_{\vec{M}\vec{N}}\notag\\
		&\ge \min_\pi\sum_{\vec{M}\in\mca{S}_\tau,\vec{N}\in\mca{S}_0}c_{\vec{M}\vec{N}}\pi_{\vec{M}\vec{N}}\notag\\
		&\ge \Delta N_0d_{XY}^{\alpha_\epsilon}\min_\pi\sum_{\vec{M}\in\mca{S}_\tau}\sum_{\vec{N}\in\mca{S}_0}\pi_{\vec{M}\vec{N}}\notag\\
		&=\Delta N_0d_{XY}^{\alpha_\epsilon}\min_\pi\sum_{\vec{M}\in\mca{S}_\tau}p_{\vec{M}}(\tau)\notag\\
		&=\Delta N_0d_{XY}^{\alpha_\epsilon}\ev{P_{\hat{n}_Y\ge N_0+\Delta N_0}}_{\varrho_\tau}.\label{eq:thm2.tmp3}
\end{align}
Here, we use the relation $\sum_{\vec{N}\in\mca{S}_0}\pi_{\vec{M}\vec{N}}=\sum_{\vec{N}}\pi_{\vec{M}\vec{N}}=p_{\vec{M}}(\tau)$ in the fourth line.
Combining Eqs.~\eqref{eq:thm2.tmp1}, \eqref{eq:thm2.tmp2}, and \eqref{eq:thm2.tmp3} yields
\begin{equation}
	\ev{P_{\hat{n}_Y\ge N_0+\Delta N_0}}_{\varrho_\tau}\le\frac{\mca{N}J\gamma\zeta(\alpha-\alpha_\epsilon-D+1)\tau}{\Delta N_0d_{XY}^{\alpha_\epsilon}},
\end{equation}
which completes the proof by setting $\kappa_{2}^\epsilon\coloneqq J\gamma\zeta(\alpha-\alpha_\epsilon-D+1)$.

\subsection{Useful lemmas}
We formally present useful lemmas about state transformation and their proofs in the following.
\begin{lemma}\label{lem:state.trans}
For any $M\ge 3$, the state transformation $\ket{M-1,1}\leftrightarrow\ket{1,M-1}$ can be accomplished within time $t=\pi/(2JM)$ by the following Hamiltonian:
\begin{equation}
	H=J(\hat b_{1}^\dagger \hat b_{2} + \hat b_{2}^\dagger \hat b_{1})+J\qty[\hat b_{2}^\dagger(\hat{n}_{2}+\hat{n}_{1})\hat b_{1}+\hat b_{1}^\dagger(\hat{n}_{2}+\hat{n}_{1})\hat b_{2}].
\end{equation}
\end{lemma}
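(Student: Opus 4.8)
The plan is to exploit conservation of the total boson number together with a reduction of the density-dependent tunneling term to an ordinary beam-splitter generator. First I would observe that $H$ commutes with $\hat{N}\coloneqq\hat n_1+\hat n_2$, so the flow generated by $H$ preserves each eigenspace $\mca{H}_M$ of $\hat{N}$; since both $\ket{M-1,1}$ and $\ket{1,M-1}$ lie in $\mca{H}_M$, it suffices to analyze $e^{-iHt}$ restricted to $\mca{H}_M$. The one genuinely delicate point---and the step I expect to matter most---is the operator ordering inside $\hat b_2^\dagger(\hat n_2+\hat n_1)\hat b_1$: applied to $\ket{n_1,n_2}$ with $n_1+n_2=M$, the annihilation $\hat b_1$ first brings the state into the $(M-1)$-particle sector, so $\hat n_1+\hat n_2$ acts there as the scalar $M-1$, giving $\hat b_2^\dagger(\hat n_1+\hat n_2)\hat b_1\ket{n_1,n_2}=(M-1)\,\hat b_2^\dagger\hat b_1\ket{n_1,n_2}$, and analogously for the Hermitian-conjugate term. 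Hence $H|_{\mca{H}_M}=JM\,(\hat b_1^\dagger\hat b_2+\hat b_2^\dagger\hat b_1)$, and it is exactly this prefactor $M$ that produces the operational time $\pi/(2JM)$.

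Given this reduction, the second step is a textbook beam-splitter computation applied to the auxiliary unitary $V\coloneqq e^{-iJMt\,K}$ with $K\coloneqq\hat b_1^\dagger\hat b_2+\hat b_2^\dagger\hat b_1$, which coincides with $e^{-iHt}$ on $\mca{H}_M$ (both being generated by $JMK$ there, and $K$ preserving $\mca{H}_M$). I would use $[K,\hat b_1]=-\hat b_2$ and $[K,\hat b_2]=-\hat b_1$ to obtain the Heisenberg flow $e^{i\lambda K}\hat b_1 e^{-i\lambda K}=\cos\lambda\,\hat b_1-i\sin\lambda\,\hat b_2$ (and its index-swapped analogue), so that at $t=\pi/(2JM)$ one has $V\hat b_1^\dagger V^\dagger=-i\hat b_2^\dagger$, $V\hat b_2^\dagger V^\dagger=-i\hat b_1^\dagger$, and $V\ket{0,0}=\ket{0,0}$ since $K$ annihilates the vacuum.

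Finally, using $e^{-iHt}\ket{M-1,1}=V\ket{M-1,1}$, I would push $V$ through the Fock expression $\ket{M-1,1}=[(M-1)!]^{-1/2}(\hat b_1^\dagger)^{M-1}\hat b_2^\dagger\ket{0,0}$, obtaining
\begin{equation}
e^{-iHt}\ket{M-1,1}=\frac{(-i\hat b_2^\dagger)^{M-1}(-i\hat b_1^\dagger)}{\sqrt{(M-1)!}}\ket{0,0}=(-i)^M\ket{1,M-1},
\end{equation}
and, by the symmetric manipulation, $e^{-iHt}\ket{1,M-1}=(-i)^M\ket{M-1,1}$ at the same time $t=\pi/(2JM)$. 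Thus the transformation $\ket{M-1,1}\leftrightarrow\ket{1,M-1}$ is realized exactly, up to an irrelevant global phase, within the claimed time. Beyond the ordering subtlety noted above the argument is routine $su(2)$/beam-splitter bookkeeping, and the hypothesis $M\ge 3$ plays no essential role---it merely guarantees that the two Fock states are distinct and matches the range required by the surrounding transport protocol.
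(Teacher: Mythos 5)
Your proof is correct, and it takes a genuinely different --- and considerably more economical --- route than the paper's. Both arguments share the same starting observation: because $\hat b_1$ acts before $\hat n_1+\hat n_2$ in the normal-ordered tunneling term, that term reduces to $(M-1)\,\hat b_2^\dagger\hat b_1$ on the $M$-particle sector, so $H|_{\mca{H}_M}=JM\,(\hat b_1^\dagger\hat b_2+\hat b_2^\dagger\hat b_1)$; this is exactly where the paper's factor $JM$ in the amplitude equation $\dot a_m=-iJM[\cdots]$ comes from. The two proofs then diverge. The paper exponentiates the resulting tridiagonal generator head-on: it conjugates to the Sylvester--Kac matrix, extracts left and right eigenvectors via generating functions, and evaluates the matrix element $[e^{-i\pi\tilde G/2}]_{1(M-1)}$ through binomial identities. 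You instead exploit the $su(2)$ (beam-splitter) structure of the two-mode hopping generator $K$: the Heisenberg rotation $e^{i\lambda K}\hat b_1e^{-i\lambda K}=\cos\lambda\,\hat b_1-i\sin\lambda\,\hat b_2$ at $\lambda=\pi/2$ swaps the modes, and pushing the evolution through the Fock representation of $\ket{M-1,1}$ gives $e^{-iHt}\ket{M-1,1}=(-i)^M\ket{1,M-1}$ at $t=\pi/(2JM)$. Your version buys several things at once: it yields the exact global phase rather than only the unit modulus of a single matrix element, it shows the stronger statement that \emph{every} Fock state $\ket{n_1,n_2}$ in the sector is mapped to $(-i)^M\ket{n_2,n_1}$, and it delivers Lemma~\ref{lem:state.trans3} for free by dropping the factor $M$. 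What the paper's heavier machinery buys in exchange is a template that survives the addition of the diagonal $U\hat n_2^2$ term in Lemma~\ref{lem:state.trans2}, where the quadratic $su(2)$ structure is broken and the beam-splitter shortcut no longer applies. Your closing remarks are also accurate: the identification $e^{-iHt}|_{\mca{H}_M}=e^{-iJMKt}|_{\mca{H}_M}$ is legitimate because both generators preserve the finite-dimensional sector and agree on it, and $M\ge 3$ serves only to keep the two target states distinct.
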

\begin{proof}
Let $\ket{\psi_t}=\sum_{m=0}^{M}a_m(t)\ket{m,M-m}$ be the expression of the time-evolved pure state.
Then, the time evolution of $a_m(t)$ can be obtained from the Schr{\"o}dinger equation $\ket{\dot\psi_t}=-iH\ket{\psi_t}$ as
\begin{align}
	\dot a_m(t)&=\braket{m,M-m}{\dot\psi_t}\notag\\
	&=-i\mel{m,M-m}{H}{\psi_t}\notag\\
	&=-iJM\sqrt{m(M-m+1)}a_{m-1}(t)-iJM\sqrt{(m+1)(M-m)}a_{m+1}(t).\label{eq:prop.tmp1}
\end{align}
Defining $\vb*{a}_t\coloneqq[a_0(t),\dots,a_M(t)]^\top$, we can rewrite Eq.~\eqref{eq:prop.tmp1} as
\begin{equation}
	\dot{\vb*{a}}_t=-iJMG\vb*{a}_t,
\end{equation}
where $\vb*{a}_0=[0,\dots,0,1,0]^\top$ and the matrix $G=[g_{mn}]_{0\le m,n\le M}$ is given by
\begin{equation}
	g_{mn}=\begin{cases}
		\sqrt{m(M-m+1)} & \text{if}~n=m-1,\\
		\sqrt{(m+1)(M-m)} & \text{if}~n=m+1,\\
		0 & \text{otherwise}.
	\end{cases}
\end{equation}
Since $\vb*{a}_t=e^{-iJMGt}\vb*{a}_0$, it needs only show that $|[e^{-i\pi G/2}]_{1(M-1)}|=|[e^{-i\pi G/2}]_{(M-1)1}|=1$.
We divide into two cases: $M=2\bar{M}-1$ and $M=2\bar{M}$.
We prove the former, and the latter can be proved analogously.
Define $\tilde{G}=PGP^{-1}=[\tilde{g}_{mn}]_{0\le m,n\le M}$, where $P={\rm diag}(\gamma_0,\dots,\gamma_M)$ is a diagonal matrix that will be determined latter.
We choose $\{\gamma_m\}$ such that
\begin{equation}
	\tilde{g}_{mn}=\begin{cases}
		M-m+1 & \text{if}~n=m-1,\\
		m+1 & \text{if}~n=m+1,\\
		0 & \text{otherwise}.
	\end{cases}
\end{equation}
This can be attained by setting
\begin{equation}
	\frac{\gamma_{m}}{\gamma_{m+1}}=\sqrt{\frac{m+1}{M-m}}.
\end{equation}
The matrix $\tilde{G}$ is nothing but the Sylvester-Kac matrix.
Since $e^{-i\pi G/2}=P^{-1}e^{-i\pi \tilde{G}/2}P$ and $\gamma_{1}=\gamma_{M-1}$, $[e^{-i\pi G/2}]_{1(M-1)}=[e^{-i\pi \tilde{G}/2}]_{1(M-1)}$; thus, we need only prove that $|[e^{-i\pi\tilde{G}/2}]_{1(M-1)}|=1$.
It was shown that $\tilde{G}$ has distinct eigenvalues $\{\lambda_k=M-2k\}_{k=0}^M$ \cite{Kac.1947.AMM}, thus being diagonalizable.
Therefore, its left eigenvectors $\{\ket{v_k^l}\}$ and right eigenvectors $\{\ket{v_k^r}\}$ are orthogonal. That is, $\braket{v_i^l}{v_j^r}=\delta_{ij}$ for any $i$ and $j$. We note that $\braket{v_k^r}$ and $\braket{v_k^l}$ are not necessarily equal to $1$. From the representation $\tilde{G}=\sum_k\lambda_k\dyad{v_k^r}{v_k^l}$, we obtain
\begin{equation}
	e^{-i\pi\tilde{G}/2}=\sum_{k=0}^Me^{-i\pi\lambda_k/2}\dyad{v_k^r}{v_k^l}=i\sum_{k=0}^M(-1)^{\bar{M}+k+1}\dyad{v_k^r}{v_k^l}.\label{eq:prop.tmp7}
\end{equation}
Next, we analytically calculate the right and left eigenvectors.
For each $\lambda_k=M-2k$, from the relation $\tilde{G}\ket{v_k^r}=\lambda_k\ket{v_k^r}$, we obtain
\begin{align}
	v_{k1}^r&=\lambda_kv_{k0}^r,\notag\\
	Mv_{k0}^r+2v_{k2}^r&=\lambda_kv_{k1}^r,\notag\\
	(M-1)v_{k1}^r+3v_{k3}^r&=\lambda_kv_{k2}^r,\notag\\
	&\dots\notag\\
	v_{k(M-1)}^r&=\lambda_kv_{kM}^r.\label{eq:prop.tmp3}
\end{align}
We consider an auxiliary infinite system,
\begin{align}
	v_{k1}^r&=\lambda_kv_{k0}^r,\notag\\
	Mv_{k0}^r+2v_{k2}^r&=\lambda_kv_{k1}^r,\notag\\
	(M-1)v_{k1}^r+3v_{k3}^r&=\lambda_kv_{k2}^r,\notag\\
	&\dots\notag\\
	v_{k(M-1)}^r+(M+1)v_{k(M+1)}^r&=\lambda_kv_{kM}^r,\notag\\
	(M+2)v_{k(M+2)}^r&=\lambda_kv_{k(M+1)}^r,\notag\\
	&\dots\label{eq:prop.tmp4}
\end{align}
If there exists a nontrivial solution of Eq.~\eqref{eq:prop.tmp4} for which $v_{k(M+1)}^r=0$, then we immediately obtain the solution for Eq.~\eqref{eq:prop.tmp3}.
Multiplying all equations of Eq.~\eqref{eq:prop.tmp4} by $1,z,z^2,\dots$ and adding them, we readily obtain
\begin{equation}
	\sum_{m=0}^\infty v_{km}^r(M-m)z^{m+1}+\sum_{m=1}^\infty v_{km}^rmz^{m-1}=\lambda_k\sum_{m=0}^\infty v_{km}^rz^m.\label{eq:prop.tmp5}
\end{equation}
Defining the generating function $f_k(z)\coloneqq\sum_{m=0}^\infty v_{km}^rz^m$, Eq.~\eqref{eq:prop.tmp5} can be translated to the following form:
\begin{equation}
	Mzf_k(z)-z^2f_k'(z)+f_k'(z)=\lambda_kf_k(z),
\end{equation}
which yields the following differential equation:
\begin{equation}
	f_k'(z)=\frac{\lambda_k-Mz}{1-z^2}f_k(z)
\end{equation}
with the initial condition $f_k(0)=v_{k0}^r$.
Solving this differential equation, $f_k(z)$ can be explicitly obtained as
\begin{equation}
	f_k(z)=v_{k0}^r(1+z)^{M-k}(1-z)^{k}.
\end{equation}
Consequently, the components of the right eigenvectors can be calculated as
\begin{equation}
	v_{km}^r=v_{k0}^r\sum_{i=m-k}^{M-k}(-1)^{m-i}{k \choose m-i}{M-k \choose i}.\label{eq:prop.right}
\end{equation}
We now calculate the left eigenvectors.
From the relation $\sum_{k}{v_{ki}^r}{v_{km}^l}=\delta_{im}$, we can proceed as follows:
\begin{align}
	z^{m}&=\sum_{i=0}^Mz^{i}\delta_{im}\notag\\
	&=\sum_{i=0}^Mz^{i}\sum_{k=0}^Mv_{ki}^rv_{km}^l\notag\\
	&=\sum_{k=0}^Mv_{km}^l\sum_{i=0}^Mv_{ki}^rz^{i}\notag\\
	&=\sum_{k=0}^Mv_{km}^lv_{k0}^r(1+z)^{M-k}(1-z)^{k},
\end{align}
from which the following relation can be obtained:
\begin{equation}\label{eq:prop.tmp6}
	\frac{z^m}{(1+z)^M}=\sum_{k=0}^Mv_{km}^lv_{k0}^r\qty(\frac{1-z}{1+z})^{k}.
\end{equation}
Let $\xi=(1-z)/(1+z)$, Eq.~\eqref{eq:prop.tmp6} can be rewritten as
\begin{equation}
	\sum_{k=0}^Mv_{km}^lv_{k0}^r\xi^k=2^{-M}(1-\xi)^m(1+\xi)^{M-m}.
\end{equation}
By comparing the coefficients of the terms $\xi^k$, the components of the left eigenvectors can be derived as
\begin{equation}
	v_{km}^l=\frac{2^{-M}}{v_{k0}^r}\sum_{i=k-m}^{M-m}(-1)^{k-i}{m \choose k-i}{M-m \choose i}.\label{eq:prop.left}
\end{equation}
Inserting the expressions obtained in Eqs.~\eqref{eq:prop.right} and \eqref{eq:prop.left} to Eq.~\eqref{eq:prop.tmp7}, we can calculate $[e^{-i\pi\tilde{G}/2}]_{1(M-1)}$ as follows:
\begin{align}
	[e^{-i\pi\tilde{G}/2}]_{1(M-1)}&=i\sum_{k=0}^M(-1)^{\bar{M}+k+1}\qty[\dyad{v_k^r}{v_k^l}]_{1(M-1)}\notag\\
	&=i\sum_{k=0}^M(-1)^{\bar{M}+k+1}v_{k1}^rv_{k(M-1)}^l\notag\\
	&=2^{-M}(-1)^{\bar{M}+1}i\qty[M+\sum_{k=0}^{M-1}\frac{(M-2k)^2}{M-k} {M-1 \choose k}]\notag\\
	&=(-1)^{\bar{M}+1}i.
\end{align}
Here, we use the following facts to obtain the last line:
\begin{align}
	\sum_{k=0}^{M-1}{M\choose k}&=2^{M}-1,\\
	\sum_{k=0}^{M-1}k{M\choose k}&=M2^{M-1}-M,\\
	\sum_{k=0}^{M-1}k(k-1){M\choose k}&=M(M-1)2^{M-2}-M(M-1).
\end{align}
Therefore, $|[e^{-i\pi\tilde{G}/2}]_{1(M-1)}|=1$, which completes the proof.
\end{proof}

\begin{lemma}\label{lem:state.trans2}
For any $M\ge 3$ and $1\le k\le M$, the state transformation $\ket{k,M-k}\leftrightarrow\ket{k-1,M-k+1}$ can be accomplished within time $t=\pi/[2JM\sqrt{k(M-k+1)}]$ in the $U\to+\infty$ limit by the following Hamiltonian:
\begin{align}
	H&=J(\hat b_1^\dagger \hat b_{2} + \hat b_{2}^\dagger \hat b_{1})+J\qty[\hat b_{2}^\dagger(\hat{n}_{2}+\hat{n}_{1})\hat b_1+\hat b_{1}^\dagger(\hat{n}_{2}+\hat{n}_{1})\hat b_{2}]\notag\\
	&-U\hat{n}_{2}^2+U(2M-2k+1)\hat{n}_{2}.
\end{align}
\end{lemma}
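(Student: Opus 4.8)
The plan is to diagonalize the penalty term, show that in the $U\to\infty$ limit it freezes the dynamics inside the two-dimensional ``resonant'' subspace it selects, and then reduce the problem to a two-level Rabi oscillation. I would work in the $M$-boson sector of the two-site Hilbert space, with orthonormal basis $\{\ket{m,M-m}\}_{m=0}^{M}$, and split $H=UH_{\mathrm{diag}}+H_{\mathrm{hop}}$, where $H_{\mathrm{diag}}=-\hat{n}_2^2+(2M-2k+1)\hat{n}_2$ and $H_{\mathrm{hop}}$ is the bare hopping plus the interaction-induced tunneling. Since $\hat{n}_2=M-m$ on $\ket{m,M-m}$, the penalty is diagonal with eigenvalue $E_m=(M-m)(M+m-2k+1)$. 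Writing $a=M-m$, this equals $a(2M-2k+1-a)$, a downward parabola in $a$ with vertex at $a=M-k+\tfrac12$; hence $E_m$ is maximal exactly at the two integers $m=k$ and $m=k-1$, both giving $(M-k)(M-k+1)$, while every other admissible $m$ gives $E_m\le (M-k)(M-k+1)-2$.

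Next I would subtract the constant $(M-k)(M-k+1)\mathbb{1}$ (irrelevant to the dynamics), so that $H_{\mathrm{diag}}$ has the two-dimensional kernel $\mathcal{P}\coloneqq\mathrm{span}\{\ket{k,M-k},\ket{k-1,M-k+1}\}$ and all its other eigenvalues are $\le-2$. Passing to the interaction picture generated by $UH_{\mathrm{diag}}$, the transformed Hamiltonian is $e^{iUH_{\mathrm{diag}}t}H_{\mathrm{hop}}e^{-iUH_{\mathrm{diag}}t}$: its matrix element between $\ket{m,M-m}$ and $\ket{m',M-m'}$ carries a phase $e^{iU(E_m-E_{m'})t}$, and a short computation shows that for the only nonvanishing hopping matrix elements (those with $m'=m\pm1$) the frequency $U|E_m-E_{m'}|$ vanishes precisely for the resonant pair $m,m'\in\{k-1,k\}$ and is otherwise $\ge 2U$. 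Hence, up to a bounded $U$-independent piece supported on $\mathcal{P}$ (namely $PH_{\mathrm{hop}}P$ with $P$ the projector onto $\mathcal{P}$), the interaction-picture Hamiltonian consists only of terms oscillating with frequency $\ge 2U$; an averaging (integration-by-parts in the Dyson series) or standard adiabatic-elimination/Zeno argument then shows that, uniformly on compact time intervals, the propagator converges as $U\to\infty$ to the one generated by $PH_{\mathrm{hop}}P$, with leakage out of $\mathcal{P}$ of order $1/U$, and with $e^{-iUH_{\mathrm{diag}}t}$ acting trivially on $\mathcal{P}$ so that the Schr\"odinger- and interaction-picture states agree there.

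It then remains to evaluate $PH_{\mathrm{hop}}P$ on $\mathcal{P}$. Its diagonal vanishes because $H_{\mathrm{hop}}$ changes $m$ by $\pm1$, and the off-diagonal entry is $\bra{k,M-k}H_{\mathrm{hop}}\ket{k-1,M-k+1}=JM\sqrt{k(M-k+1)}$ --- exactly the matrix element already computed in the proof of Lemma \ref{lem:state.trans}, the factor $M=1+(M-1)$ being the bare hop plus the density-assisted hop. Thus the effective Hamiltonian on $\mathcal{P}$ is $\Omega$ times a Pauli $X$ in the basis $\{\ket{k,M-k},\ket{k-1,M-k+1}\}$, with $\Omega=JM\sqrt{k(M-k+1)}$, so $e^{-i\Omega t\,\sigma_x}$ performs a Rabi oscillation and at $t=\pi/(2\Omega)=\pi/[2JM\sqrt{k(M-k+1)}]$ maps $\ket{k,M-k}$ to $-i\ket{k-1,M-k+1}$ and vice versa. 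Since a global phase is immaterial for a state transformation, this proves the claim.

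The main obstacle is making the ``$U\to\infty$'' step rigorous with quantitative error control, i.e. turning ``the large penalty freezes the off-resonant levels'' into a genuine convergence statement. I would handle it by a second-order Magnus/averaging estimate on the interaction-picture Hamiltonian --- using that all off-resonant phases have frequency bounded below by $2U$ while the relevant operators act on a finite-dimensional space and are bounded uniformly in $U$ --- or, equivalently, by invoking the standard theorem on the emergence of Zeno dynamics under a dominating Hamiltonian with a spectral gap. Everything else --- the algebra of $E_m$ and the identification of the resonant pair, and the $2\times2$ Rabi computation --- is elementary and parallels Lemma \ref{lem:state.trans}.
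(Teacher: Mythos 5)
Your proposal is correct and is essentially the paper's argument in different clothing: the paper also isolates the degenerate pair $m=k-1,k$ of the diagonal penalty $U(M-m)(M+m-2k+1)$, obtains the limiting eigenvalues $U(M-k)(M-k+1)\pm JM\sqrt{k(M-k+1)}$ with eigenvectors $(\ket{k-1}\pm\ket{k})/\sqrt{2}$, and reads off the Rabi period from $|e^{-i\lambda_{k-1}t}-e^{-i\lambda_k t}|/2=1$. Your interaction-picture/averaging phrasing of the $U\to\infty$ step is an equivalent (and, as you note, more explicitly quantifiable) route to the same two-level reduction, and all your intermediate computations (the gap of $2$ to off-resonant levels, the coupling $JM\sqrt{k(M-k+1)}$, the time $\pi/(2\Omega)$) match the paper's.
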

\begin{proof}
Following the same approach in Lemma \ref{lem:state.trans}, we can derive the following differential equation:
\begin{equation}
	\dot{\vb*{a}}_t=-iG\vb*{a}_t,
\end{equation}
where the matrix $G=[g_{mn}]_{0\le m,n\le M}$ is given by
\begin{equation}
	g_{mn}=\begin{cases}
		JM\sqrt{m(M-m+1)} & \text{if}~n=m-1,\\
		JM\sqrt{(m+1)(M-m)} & \text{if}~n=m+1,\\
		U(M-m)(M+m-2k+1) & \text{if}~n=m,\\
		0 & \text{otherwise}.
	\end{cases}
\end{equation}
The real symmetric matrix $G$ has distinct real eigenvalues $\lambda_0,\lambda_1,\dots,\lambda_M$, and their corresponding eigenvectors are denoted by $\{\ket{v_m^r}\}$.
In the $U\to+\infty$ limit, we can easily show that
\begin{align*}
	\lim_{U\to+\infty}\qty[\lambda_{k-1}-U(M-k+1)(M-k)]&=JM\sqrt{k(M-k+1)},\\
	\lim_{U\to+\infty}\qty[\lambda_{k}-U(M-k)(M-k+1)]&=-JM\sqrt{k(M-k+1)},\\
	\lim_{U\to+\infty}\qty[\lambda_m-U(M-m)(M+m-2k+1)]&=0~\forall m\neq k-1,k,\\
	\lim_{U\to+\infty}\ket{v_{k-1}^r}&=[\underbrace{0,\dots,0}_{k-1},1/\sqrt{2},1/\sqrt{2},0,\dots,0]^\top,\\
	\lim_{U\to+\infty}\ket{v_k^r}&=[\underbrace{0,\dots,0}_{k-1},1/\sqrt{2},-1/\sqrt{2},0,\dots,0]^\top,\\
	\lim_{U\to+\infty}\ket{v_m^r}&=[\delta_{mi}]^\top~\forall m\neq k-1,k.
\end{align*}
Defining $P=(\ket{v_0^r},\dots,\ket{v_M^r})$ and noting that $P^\top=P$ and $P^2=\mbb{1}$, we obtain $G=P\diag{\qty(\lambda_0,\dots,\lambda_M)}P$.
Consequently, the matrix $e^{-iGt}$ can be calculated in the $U\to+\infty$ limit as $e^{-iGt}=P\diag{\qty(e^{-i\lambda_0t},\dots,e^{-i\lambda_Mt})}P$.
We need only show that $|[e^{-iGt}]_{k(k-1)}|=|[e^{-iGt}]_{(k-1)k}|=1$ for $t=\pi/[2JM\sqrt{k(M-k+1)}]$.
Indeed, we can prove as follows:
\begin{align}
	|[e^{-iGt}]_{(k-1)k}|&=\qty|e^{-i\lambda_{k-1}t}-e^{-i\lambda_{k}t}|/2\notag\\
	&=\qty|1-e^{i(\lambda_{k-1}-\lambda_{k})t}|/2\notag\\
	&=\qty|1-e^{2iJM\sqrt{k(M-k+1)}t}|/2\notag\\
	&=\qty|1-e^{i\pi}|/2\notag\\
	&=1.
\end{align}
This completes the proof.
\end{proof}

\begin{lemma}\label{lem:state.trans3}
For any $M\ge 3$, the state transformation $\ket{M-1,1}\leftrightarrow\ket{1,M-1}$ can be accomplished within time $t=\pi/(2J)$ by the following Hamiltonian:
\begin{equation}
	H=J(\hat b_{1}^\dagger \hat b_{2} + \hat b_{2}^\dagger \hat b_{1}).
\end{equation}
\end{lemma}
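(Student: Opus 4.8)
The plan is to obtain Lemma~\ref{lem:state.trans3} as a direct corollary of Lemma~\ref{lem:state.trans}. Since $H=J(\hat{b}_{1}^\dagger \hat{b}_{2} + \hat{b}_{2}^\dagger \hat{b}_{1})$ conserves the total boson number and the initial state carries $M$ bosons, the entire evolution stays in the $(M+1)$-dimensional sector spanned by $\{\ket{m,M-m}\}_{m=0}^{M}$. On this sector one has $\hat{n}_{1}+\hat{n}_{2}=M\,\mbb{1}$, and an elementary operator-ordering check shows that the density-assisted hoppings collapse to ordinary hoppings, $\hat{b}_{2}^\dagger(\hat{n}_{1}+\hat{n}_{2})\hat{b}_{1}=(M-1)\,\hat{b}_{2}^\dagger \hat{b}_{1}$ and $\hat{b}_{1}^\dagger(\hat{n}_{1}+\hat{n}_{2})\hat{b}_{2}=(M-1)\,\hat{b}_{1}^\dagger \hat{b}_{2}$ (the annihilation operator first lowers $\hat{n}_{1}+\hat{n}_{2}$ to $M-1$ before the counting operator acts). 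Hence, restricted to this sector, the Hamiltonian of Lemma~\ref{lem:state.trans} equals $JM(\hat{b}_{1}^\dagger \hat{b}_{2}+\hat{b}_{2}^\dagger \hat{b}_{1})=MH$.

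It follows that $e^{-iHt}$ agrees, on the $M$-boson sector, with the propagator of Lemma~\ref{lem:state.trans} evaluated at the rescaled time $t/M$. Lemma~\ref{lem:state.trans} asserts that the latter realizes the transformation $\ket{M-1,1}\leftrightarrow\ket{1,M-1}$ (i.e.\ with unit overlap magnitude) at time $\pi/(2JM)$; therefore $H$ realizes the same transformation at time $M\cdot\pi/(2JM)=\pi/(2J)$, which is exactly the claim. The hypothesis $M\ge 3$ is inherited from Lemma~\ref{lem:state.trans} and merely guarantees $\ket{M-1,1}\ne\ket{1,M-1}$.

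For a self-contained alternative that bypasses Lemma~\ref{lem:state.trans}, one can invoke the Jordan--Schwinger map. Setting $\hat{J}_{+}=\hat{b}_{1}^\dagger \hat{b}_{2}$, $\hat{J}_{-}=\hat{b}_{2}^\dagger \hat{b}_{1}$, and $\hat{J}_{z}=(\hat{n}_{1}-\hat{n}_{2})/2$ identifies the $M$-boson sector with the spin-$j$ representation for $j=M/2$, under which $\ket{m,M-m}=\ket{j,m-M/2}$ and $H=2J\hat{J}_{x}$. Then $e^{-iHt}$ at $t=\pi/(2J)$ is the $\pi$-rotation $e^{-i\pi\hat{J}_{x}}$, which maps $\ket{j,m_{z}}$ to a unit-modulus phase times $\ket{j,-m_{z}}$. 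Because $\ket{M-1,1}$ and $\ket{1,M-1}$ correspond to $m_{z}=\pm(M/2-1)$, the rotation carries one onto the other up to a phase, again establishing the transformation at $t=\pi/(2J)$.

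I expect no genuine obstacle here: all the spectral content---the reduction to the Sylvester--Kac matrix and the resulting eigenvalues---has already been carried out in the proof of Lemma~\ref{lem:state.trans}, and the only additional input is the trivial identity $\hat{n}_{1}+\hat{n}_{2}=M\,\mbb{1}$ on the relevant sector (equivalently, $H=2J\hat{J}_{x}$ in the spin picture). The sole point deserving care is the operator ordering when collapsing the density-assisted hopping terms to ordinary hopping, which is nonetheless a one-line verification.
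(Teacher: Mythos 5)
Your proposal is correct. The paper's own proof of Lemma~\ref{lem:state.trans3} is simply declared ``analogous to that of Lemma~\ref{lem:state.trans}'': one repeats the expansion $\ket{\psi_t}=\sum_m a_m(t)\ket{m,M-m}$, obtains $\dot{\vb*{a}}_t=-iJG\vb*{a}_t$ with the \emph{same} tridiagonal matrix $G$ but prefactor $J$ in place of $JM$, and reuses the Sylvester--Kac computation at $t=\pi/(2J)$. Your first argument is a slightly slicker packaging of that same content: the sector identity $\hat b_2^\dagger(\hat n_1+\hat n_2)\hat b_1=(M-1)\hat b_2^\dagger\hat b_1$ is correct (it is precisely why the matrix elements in Eq.~\eqref{eq:prop.tmp1} carry the factor $JM$), so the two Hamiltonians are proportional on the $M$-boson sector and Lemma~\ref{lem:state.trans3} is literally Lemma~\ref{lem:state.trans} with time rescaled by $M$. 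Your second argument is genuinely different and arguably the cleanest route: identifying $H=2J\hat J_x$ under the Jordan--Schwinger map reduces the claim to the standard fact that a $\pi$-rotation about the $x$-axis conjugates $\hat J_z$ to $-\hat J_z$, hence sends each $\ket{j,m_z}$ to a unit-modulus multiple of $\ket{j,-m_z}$. This bypasses the generating-function computation of the Sylvester--Kac eigenvectors entirely, proves the stronger statement that $\ket{m,M-m}\to\ket{M-m,m}$ (up to phase) for every $m$, and in fact yields an independent proof of Lemma~\ref{lem:state.trans} itself, whose Hamiltonian is $2JM\hat J_x$ on the same sector.
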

\begin{proof}
The proof is analogous to that of Lemma \ref{lem:state.trans}.
\end{proof}
\begin{lemma}\label{lem:state.trans4}
For any $M\ge 3$ and $1\le k\le M$, the state transformation $\ket{k,M-k}\leftrightarrow\ket{k-1,M-k+1}$ can be accomplished within time $t=\pi/[2J\sqrt{k(M-k+1)}]$ in the $U\to+\infty$ limit by the following Hamiltonian:
\begin{align}
	H&=J(\hat b_1^\dagger \hat b_{2} + \hat b_{2}^\dagger \hat b_{1})-U\hat{n}_{2}^2+U(2M-2k+1)\hat{n}_{2}.
\end{align}
\end{lemma}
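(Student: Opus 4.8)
The plan is to follow the structure of the proof of Lemma~\ref{lem:state.trans2}. First I would restrict the dynamics to the $(M+1)$-dimensional subspace spanned by the two-site Fock states $\ket{m,M-m}$, $m=0,\dots,M$. Expanding the time-evolved state as $\ket{\psi_t}=\sum_{m=0}^{M}a_m(t)\ket{m,M-m}$ and projecting the Schr\"odinger equation onto this basis yields $\dot{\vb*{a}}_t=-iG\vb*{a}_t$ with initial data $\vb*{a}_0=[\delta_{mk}]^\top$, where $G=[g_{mn}]_{0\le m,n\le M}$ is the real symmetric tridiagonal matrix with $g_{m,m-1}=J\sqrt{m(M-m+1)}$, $g_{m,m+1}=J\sqrt{(m+1)(M-m)}$, and $g_{m,m}=U(M-m)(M+m-2k+1)$. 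This is precisely the matrix of Lemma~\ref{lem:state.trans2}, except that the off-diagonal hopping amplitudes carry the prefactor $J$ instead of $JM$, since the interaction-induced tunneling term is now absent.

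Since $G$ is real symmetric it is orthogonally diagonalizable, $G=P\diag(\lambda_0,\dots,\lambda_M)P$ with $P^\top=P$ and $P^2=\mbb{1}$, so that $e^{-iGt}=P\diag(e^{-i\lambda_0t},\dots,e^{-i\lambda_M t})P$. The key step is to analyze the $U\to+\infty$ limit of the spectrum. Viewing the diagonal entries $d_m\coloneqq U(M-m)(M+m-2k+1)$ as a function of $m$, one checks that $d_m/U$ is a strictly concave parabola with vertex at $m=k-\tfrac12$; hence $d_{k-1}=d_k=U(M-k)(M-k+1)$ while $d_m<d_{k-1}$ strictly for every $m\neq k-1,k$. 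Thus exactly two levels are degenerate at order $U$, and degenerate perturbation theory in the two-dimensional subspace spanned by the $(k-1)$-th and $k$-th basis vectors $e_{k-1},e_k$ shows that, in the limit, $\lambda_{k-1}-\lambda_k\to 2J\sqrt{k(M-k+1)}$ with the corresponding eigenvectors tending to $(e_{k-1}\pm e_k)/\sqrt2$, all other eigenvectors converging to standard basis vectors while their eigenvalues separate at order $U$ --- exactly as spelled out in Lemma~\ref{lem:state.trans2}.

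It then remains to compute the relevant matrix element of $e^{-iGt}$. In the limit only the two near-degenerate modes contribute to the $(k-1,k)$ entry, so
\begin{align}
	|[e^{-iGt}]_{(k-1)k}|=\qty|e^{-i\lambda_{k-1}t}-e^{-i\lambda_k t}|/2=\qty|1-e^{i(\lambda_{k-1}-\lambda_k)t}|/2,
\end{align}
and since $\lambda_{k-1}-\lambda_k\to 2J\sqrt{k(M-k+1)}$, the choice $t=\pi/[2J\sqrt{k(M-k+1)}]$ gives $\qty|1-e^{i\pi}|/2=1$; by symmetry $|[e^{-iGt}]_{k(k-1)}|=1$ as well, which establishes the transformation $\ket{k,M-k}\leftrightarrow\ket{k-1,M-k+1}$ within the claimed time. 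The main obstacle is making the $U\to+\infty$ asymptotics of the eigenvalues and eigenvectors rigorous --- a degenerate perturbation argument for a tridiagonal matrix with a large diagonal. Because $G$ shares the structure of the matrix in Lemma~\ref{lem:state.trans2}, the cleanest resolution is to note that the entire $U\to+\infty$ analysis there carries over unchanged with $JM$ replaced by $J$, so that no additional estimate is required and one may simply write ``the proof is analogous to that of Lemma~\ref{lem:state.trans2}.''
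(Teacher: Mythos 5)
Your proposal is correct and takes essentially the same route as the paper, whose proof of this lemma is literally the one-line remark that it is analogous to Lemma~\ref{lem:state.trans2}; you have simply spelled out that analogy, correctly noting that the only change is the replacement of the off-diagonal amplitudes $JM\sqrt{m(M-m+1)}$ by $J\sqrt{m(M-m+1)}$, which rescales the limiting splitting $\lambda_{k-1}-\lambda_k$ from $2JM\sqrt{k(M-k+1)}$ to $2J\sqrt{k(M-k+1)}$ and hence the time to $\pi/[2J\sqrt{k(M-k+1)}]$. Your added observation that $d_m=U(M-m)(M+m-2k+1)$ is a concave parabola with vertex at $m=k-\tfrac12$, so that $m=k-1,k$ is the unique adjacent degenerate pair, is a correct and slightly more explicit justification of the $U\to+\infty$ asymptotics than the paper provides.
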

\begin{proof}
The proof is analogous to that of Lemma \ref{lem:state.trans2}.
\end{proof}

\begin{lemma}\label{lem:KR.dual}
The Wasserstein distance $\mca{W}(\vb*{x},\vb*{y})\coloneqq\min_{\pi\in\mca{C}(\vb*{x},\vb*{y})}\sum_{i,j}\pi_{ij}c_{ij}$ can be expressed in the following form:
\begin{equation}
	\mca{W}(\vb*{x},\vb*{y})=\max_{\vb*{\phi}}\vb*{\phi}^\top(\vb*{x}-\vb*{y}),\label{eq:lem.dual}
\end{equation}
where the maximum is over all vectors $\vb*{\phi}=[\phi_1,\dots,\phi_N]^\top$ satisfying $|\phi_i-\phi_j|\le c_{ij}~\forall i,j$.
\end{lemma}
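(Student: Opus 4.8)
The plan is to view the minimization defining $\mca{W}(\vb*{x},\vb*{y})$ as a finite linear program and to recognize \eqref{eq:lem.dual} as the reduced form of its LP dual. First I would put the primal in standard form: minimize $\sum_{i,j}c_{ij}\pi_{ij}$ over $\pi_{ij}\ge 0$ subject to $\sum_j\pi_{ij}=y_i$ and $\sum_j\pi_{ji}=x_i$ for all $i$. Summing either family of constraints forces $\sum_i x_i=\sum_i y_i$, which I take to hold (it does in all uses here, since $\sum_i x_i(t)$ is time-invariant); then $\mca{C}(\vb*{x},\vb*{y})$ is nonempty, as it contains the product plan $\pi_{ij}=x_jy_i/\sum_k x_k$, and the objective is bounded below by $0$ because $c_{ij},\pi_{ij}\ge 0$. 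Hence strong LP duality holds and both optima are attained. Attaching a free multiplier $u_i$ to $\sum_j\pi_{ij}=y_i$ and $v_i$ to $\sum_j\pi_{ji}=x_i$, the dual is
\[
	\mca{W}(\vb*{x},\vb*{y})=\max\Big\{\textstyle\sum_i y_iu_i+\sum_i x_iv_i\ :\ u_i+v_j\le c_{ij}\ \forall i,j\Big\}.
\]

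One inequality in \eqref{eq:lem.dual} is then immediate. Given any $\vb*{\phi}$ with $|\phi_i-\phi_j|\le c_{ij}$, the pair $u_i=-\phi_i$, $v_i=\phi_i$ is dual feasible since $u_i+v_j=\phi_j-\phi_i\le c_{ij}$, and its dual value is exactly $\vb*{\phi}^\top(\vb*{x}-\vb*{y})$. Therefore $\max_{\vb*{\phi}}\vb*{\phi}^\top(\vb*{x}-\vb*{y})\le\mca{W}(\vb*{x},\vb*{y})$.

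For the reverse inequality I would start from an optimal dual pair $(u,v)$ and improve it in steps, never decreasing the dual value, using only $c_{ij}=c_{ji}$, $c_{ij}+c_{jk}\ge c_{ik}$, $c_{ii}=0$, and $\vb*{x},\vb*{y}\ge 0$. (a) Replace $v_j$ by its $c$-transform $\bar v_j\coloneqq\min_i(c_{ij}-u_i)$; one checks $(u,\bar v)$ is still feasible, $\bar v_j\ge v_j$, and hence $\sum_j x_j\bar v_j\ge\sum_j x_jv_j$. (b) Replace $u_i$ by $\bar u_i\coloneqq\min_j(c_{ij}-\bar v_j)$; again $(\bar u,\bar v)$ is feasible, $\bar u_i\ge u_i$, so $\sum_i y_i\bar u_i\ge\sum_i y_iu_i$. (c) Show $\bar v$ is $c$-Lipschitz, $|\bar v_i-\bar v_j|\le c_{ij}$: if $k^\star$ realizes the minimum in $\bar v_j$, then $\bar v_i\le c_{k^\star i}-u_{k^\star}\le c_{k^\star j}+c_{ji}-u_{k^\star}=\bar v_j+c_{ij}$, and the roles of $i,j$ swap. (d) Since $c_{ii}=0$ yields $\bar u_i\le c_{ii}-\bar v_i=-\bar v_i$, the antisymmetric pair $(-\bar v,\bar v)$ is dual feasible (its constraint $\bar v_j-\bar v_i\le c_{ij}$ is the Lipschitz bound from (c)), and its value $\sum_i y_i(-\bar v_i)+\sum_j x_j\bar v_j=\bar v^\top(\vb*{x}-\vb*{y})$ is at least the value of $(\bar u,\bar v)$, hence at least $\mca{W}(\vb*{x},\vb*{y})$. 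Choosing $\vb*{\phi}=\bar v$ gives $\max_{\vb*{\phi}}\vb*{\phi}^\top(\vb*{x}-\vb*{y})\ge\mca{W}(\vb*{x},\vb*{y})$, which together with the previous paragraph establishes \eqref{eq:lem.dual}.

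The only part that is not routine bookkeeping is (c)–(d): the two successive $c$-transforms collapse an arbitrary dual-feasible pair into a single potential $\bar v$ that is at once Lipschitz with respect to $c$ and---thanks to $c_{ii}=0$---compatible with the antisymmetric form $(-\bar v,\bar v)$, so that the two-potential dual genuinely reduces to the one-potential maximization in \eqref{eq:lem.dual}. I would explicitly flag the use of $c_{ii}=0$: it holds for every cost appearing in the paper (both $c_{ij}=\|i-j\|^{\alpha_\epsilon}$ and the shortest-path costs $c_{\vec M\vec N}$), but with only symmetry and the triangle inequality the stated identity can fail (e.g.\ the constant cost $c_{ij}\equiv 1$). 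The remaining ingredients---LP strong duality for a finite, feasible, bounded program, and the monotonicity of the dual objective in (a)–(b) via $\vb*{x},\vb*{y}\ge 0$---are standard.
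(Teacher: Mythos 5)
Your proposal is correct and follows essentially the same route as the paper's proof: strong LP duality to obtain the two-potential dual, followed by a $c$-transform argument that collapses it to a single $c$-Lipschitz potential (the paper proves duality via a Lagrangian/minimax argument, performs one transform on $\vb*{g}$, and handles the easy inequality directly via the optimal coupling, but these differences are cosmetic). Your explicit flagging of the hidden hypothesis $c_{ii}=0$ --- which the paper's proof uses but does not list among its stated requirements of symmetry and the triangle inequality --- is a worthwhile addition.
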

\begin{proof}
The constraint $\pi\in\mca{C}(\vb*{x},\vb*{y})$ on $\pi$ in the definition of $\mca{W}(\vb*{x},\vb*{y})$ can be relaxed to $\pi\in\mbb{R}_{\ge 0}^{N\times N}$ by adding a term $R(\pi)$ to the variational expression such that $R(\pi)=0$ for $\pi\in\mca{C}(\vb*{x},\vb*{y})$ and $R(\pi)=+\infty$ otherwise.
Although there are infinitely many ways to define $R(\pi)$, we conveniently choose
\begin{equation}
	R(\pi)=\max_{\vb*{f},\vb*{g}\in\mbb{R}^N}\qty[\vb*{f}^\top\vb*{x}+\vb*{g}^\top\vb*{y}-\sum_{i,j}(g_i+f_j)\pi_{ij}].
\end{equation}
Consequently, the Wasserstein distance can be expressed as
\begin{align}
	\mca{W}(\vb*{x},\vb*{y})&=\min_{\pi\in\mbb{R}_{\ge 0}^{N\times N}}\qty[\sum_{i,j}\pi_{ij}c_{ij}+R(\pi)]\notag\\
	&=\min_{\pi\in\mbb{R}_{\ge 0}^{N\times N}}\max_{\vb*{f},\vb*{g}\in\mbb{R}^N}\qty[\sum_{i,j}\pi_{ij}(c_{ij}-g_i-f_j)+\vb*{f}^\top\vb*{x}+\vb*{g}^\top\vb*{y}]\notag\\
	&=\min_{\pi\in\mbb{R}_{\ge 0}^{N\times N}}\max_{\vb*{f},\vb*{g}\in\mbb{R}^N}L(\pi,\vb*{f},\vb*{g}).\label{eq:lem.KR.tmp0}
\end{align}
Here, we define $L(\pi,\vb*{f},\vb*{g})\coloneqq\sum_{i,j}\pi_{ij}(c_{ij}-g_i-f_j)+\vb*{f}^\top\vb*{x}+\vb*{g}^\top\vb*{y}$.
Since $L(\pi,\vb*{f},\vb*{g})$ is a linear function in its arguments, the max-min inequality is also an equality, according to von Neumann's minimax theorem.
Therefore, we can proceed further from Eq.~\eqref{eq:lem.KR.tmp0} as
\begin{align}
	\mca{W}(\vb*{x},\vb*{y})&=\min_{\pi\in\mbb{R}_{\ge 0}^{N\times N}}\max_{\vb*{f},\vb*{g}\in\mbb{R}^N}L(\pi,\vb*{f},\vb*{g})\notag\\
	&=\max_{\vb*{f},\vb*{g}\in\mbb{R}^N}\min_{\pi\in\mbb{R}_{\ge 0}^{N\times N}} L(\pi,\vb*{f},\vb*{g})\notag\\
	&=\max_{\vb*{f},\vb*{g}\in\mbb{R}^N}\qty[\vb*{f}^\top\vb*{x}+\vb*{g}^\top\vb*{y}+\min_{\pi\in\mbb{R}_{\ge 0}^{N\times N}}\sum_{i,j}\pi_{ij}(c_{ij}-g_i-f_j)].\label{eq:lem.KR.tmp1}
\end{align}
If there exist $i$ and $j$ such that $c_{ij}-g_i-f_j<0$, then $\min_{\pi\in\mbb{R}_{\ge 0}^{N\times N}}\sum_{i,j}\pi_{ij}(c_{ij}-g_i-f_j)=-\infty$.
Therefore, we need only consider $\vb*{f},\vb*{g}\in\mbb{R}^N$ such that $c_{ij}-g_i-f_j\ge 0$, or equivalently $f_j+g_i\le c_{ij}~\forall i,j$.
Therefore, Eq.~\eqref{eq:lem.KR.tmp1} can be expressed as
\begin{align}
	\mca{W}(\vb*{x},\vb*{y})&=\max_{\vb*{f},\vb*{g}\in\mbb{R}^N:f_j+g_i\le c_{ij}}\qty[\vb*{f}^\top\vb*{x}+\vb*{g}^\top\vb*{y}+\min_{\pi\in\mbb{R}_{\ge 0}^{N\times N}}\sum_{i,j}\pi_{ij}(c_{ij}-g_i-f_j)].\label{eq:lem.KR.tmp2}
\end{align}
Now, noticing that $\min_{\pi\in\mbb{R}_{\ge 0}^{N\times N}}\sum_{i,j}\pi_{ij}(c_{ij}-g_i-f_j)=0$ because $c_{ij}-g_i-f_j\ge 0$, we immediately obtain the following equality from Eq.~\eqref{eq:lem.KR.tmp2}:
\begin{equation}
	\mca{W}(\vb*{x},\vb*{y})=\max_{\vb*{f},\vb*{g}\in\mbb{R}^N:f_j+g_i\le c_{ij}}(\vb*{f}^\top\vb*{x}+\vb*{g}^\top\vb*{y}).
\end{equation}
For any fixed $\vb*{g}$, the optimal $\tilde{\vb*{f}}$ that maximizes $(\vb*{f}^\top\vb*{x}+\vb*{g}^\top\vb*{y})$ under the constraints is uniquely determined as
\begin{equation}
	\tilde{f}_i=\min_{j}(c_{ij}-g_j)~\forall i.
\end{equation}
Since $c_{ij}\le c_{kj}+c_{ik}~\forall k$, we obtain
\begin{equation}
	\tilde{f}_i\le\min_{j}(c_{kj}-g_j+c_{ik})=\tilde{f}_k+c_{ik}.
\end{equation}
This implies $|\tilde{f}_i-\tilde{f}_k|\le c_{ik}$ as $c_{ik}=c_{ki}$, which also means that the vector $\tilde{\vb*{f}}$ satisfies the 1-Lipschitz condition.
Moreover, since $c_{ii}=0$, we also have
\begin{equation}
	\tilde{f}_i\le-g_i~\text{or}~g_i\le -\tilde{f}_i.
\end{equation}
Combining these inequalities, we can upper bound the Wasserstein distance as
\begin{align}
	\mca{W}(\vb*{x},\vb*{y})&=\max_{\vb*{g}\in\mbb{R}^N}(\tilde{\vb*{f}}^\top\vb*{x}+\vb*{g}^\top\vb*{y})\notag\\
	&\le\max_{\vb*{g}\in\mbb{R}^N}(\tilde{\vb*{f}}^\top\vb*{x}-\tilde{\vb*{f}}^\top\vb*{y})\notag\\
	&\le\max_{\vb*{\phi}\in\mbb{R}^N:|\phi_i-\phi_j|\le c_{ij}}\vb*{\phi}^\top(\vb*{x}-\vb*{y}).\label{eq:lem.W.ub}
\end{align}
Let $\tilde{\pi}$ be the optimal coupling that attains the Wasserstein distance, that is, $\mca{W}(\vb*{x},\vb*{y})=\sum_{i,j}\tilde{\pi}_{ij}c_{ij}$.
Then, we can also prove that
\begin{align}
	\max_{\vb*{\phi}\in\mbb{R}^N:|\phi_i-\phi_j|\le c_{ij}}\vb*{\phi}^\top(\vb*{x}-\vb*{y})&=\max_{\vb*{\phi}\in\mbb{R}^N:|\phi_i-\phi_j|\le c_{ij}}\sum_{i,j}(\phi_i\tilde{\pi}_{ji}-\phi_i\tilde{\pi}_{ij}),\notag\\
	&=\max_{\vb*{\phi}\in\mbb{R}^N:|\phi_i-\phi_j|\le c_{ij}}\sum_{i,j}\tilde{\pi}_{ij}(\phi_j-\phi_i),\notag\\
	&\le\sum_{i,j}\tilde{\pi}_{ij}c_{ij}\notag\\
	&=\mca{W}(\vb*{x},\vb*{y}).\label{eq:lem.W.lb}
\end{align}
From Eqs.~\eqref{eq:lem.W.ub} and \eqref{eq:lem.W.lb}, we immediately obtain the desired result \eqref{eq:lem.dual}.
\end{proof}

\end{document}